\newfont{\mycrnotice}{ptmr8t at 7pt}
\newfont{\myconfname}{ptmri8t at 7pt}
\newtheorem{theorem}{Theorem}
\newtheorem{proposition}[theorem]{Proposition}
\newtheorem{corollary}[theorem]{Corollary}
\newtheorem{lemma}[theorem]{Lemma}
\newenvironment{result-repeat}[2]{\begin{trivlist}
\item[\hspace{\labelsep}{\bf\noindent {#1}~\ref{#2} }]\it}%
{\end{trivlist}}
\newcommand{\journalversion}[1]{}
\newcommand{\gV}{\textswab{V}}
\newcommand{\bR}{\mathbf{R}}
\newcommand{\ceil}[1]{\left\lceil #1 \right\rceil}
\newcommand{\nat}{{\rm I\!N}} 
\newcommand{\mymod}{\mbox{ mod }}
\newcommand{\ignore}[1]{}
\newcounter{linenum}
\def\codeTabSpace{\hspace*{6mm}}
\newenvironment{code}%
{\begin{tabbing}%
\codeTabSpace \= \hspace*{30mm} \= \hspace*{40mm} \= \hspace*{42mm} \= \kill%
}%
{\end{tabbing}%
}
\newcounter{ind}
\newcommand{\n}{\addtocounter{ind}{7}\hspace*{7mm}}
\newcommand{\p}{\addtocounter{ind}{-7}\hspace*{-7mm}}
\newcommand{\nl}{\\\stepcounter{linenum}{\scriptsize \arabic{linenum}}\>\hspace*{\value{ind}mm}}
\newcommand{\ul}{\\\>\hspace*{\value{ind}mm}}
\newcommand{\bl}{\\[-1.5mm]\>\hspace*{\value{ind}mm}}
\newcommand{\firstline}{\stepcounter{linenum}{\scriptsize \arabic{linenum}}\>}
\newcommand{\lref}[1]{\linenumref{#1}} 
\newcommand{\longsquiggly}{\xymatrix@C=1.5em{{}\ar@{~>}[r]&{}}}
\newcommand{\goes}[1]{\!\overset{#1}{\longsquiggly}\!}
\begin{document}
\sloppy

\title{On the Space Complexity of Set Agreement\thanks{The first two authors were supported by  the Agence Nationale de la Recherche, project
DISPLEXITY, the third author was
    supported by the Agence Nationale de la Recherche,  under
  grant agreement ANR-14-CE35-0010-01, project DISCMAT, and the fourth author
  by the Fondation Sciences
    Math\'e\-ma\-tiques de Paris and the Natural Sciences and
    Engineering Research Council of Canada.}}

\author{
Carole Delporte-Gallet
\protect\footnote{LIAFA, Universit\'e Paris-Diderot} 
\hspace{1cm}
Hugues Fauconnier
\protect\footnotemark[2]
\\
\\
Petr Kuznetsov
\protect\footnote{T\'el\'ecom ParisTech} 
\hspace{1cm}
Eric Ruppert
\protect\footnote{York University}
}

\date{}

\maketitle

\begin{abstract}
The $k$-set agreement problem is a generalization of the classical consensus 
problem in which processes are permitted to output up to $k$ different
input values.
In a system of $n$ processes, an $m$-obstruction-free solution to the problem 
requires termination only in executions where the number of processes taking 
steps is eventually bounded by $m$.
This family of progress conditions generalizes wait-freedom ($m=n$) 
and obstruction-freedom ($m=1$).
In this paper, we prove upper and lower bounds on the 
number of registers required to solve $m$-obstruction-free 
$k$-set agreement, considering both one-shot and repeated formulations.
In particular, we show that repeated $k$ set agreement can be solved using
$n+2m-k$ registers and establish a nearly matching lower bound of $n+m-k$.
\end{abstract}


\section{Introduction}
\indent

Algorithms that allow processes to reach
agreement are one of the central concerns of the theory of
distributed computing, since some kind of 
agreement underlies many tasks that require
processes to coordinate with one another.
In the classical consensus problem, each process begins with an input
value, and all processes must agree to output one of those input values.
Chaudhuri \cite{Cha93} introduced the $k$-set agreement problem, which
generalizes the consensus problem by
allowing processes to output up to $k$ different input values in any execution.
Consensus is the special case where $k=1$.
Set agreement is trivial for $n$ processes if $k\geq n$:
each process can simply output its own input value.

We consider the $k$-set agreement problem for $k<n$ in an asynchronous 
system equipped with shared read/write registers.
To satisfy {\it wait-free} termination, non-faulty processes
must terminate even if an arbitrary number of processes fail.
The impossibility of solving wait-free $k$-set agreement using registers was a 
landmark result proved by three groups of researchers \cite{BG93b,HS99,SZ00}.
However, Herlihy, Luchangco and Moir~\cite{HLM03} observed that
$k$-set agreement {\it is} solvable (even for $k=1$)
under a weaker termination property, known as {\it obstruction-freedom}
or {\it solo-termination}, which requires that a process must eventually
terminate if it  takes enough steps without
interruption from other processes.
Obstruction-freedom was introduced as a way of separating concerns:
obstruction-free algorithms maintain safety properties in
all possible executions, but make progress only when one process can
run for long enough without encountering contention.
Various scheduling mechanisms designed to reduce contention (such as backing off)
can then be used to satisfy this condition.

Taubenfeld \cite{Tau09a} introduced the $m$-obstruction-freedom
progress property, which requires that, in any execution where
at most $m$ processes take infinitely many steps, each process
that continues to take steps will eventually terminate successfully.
Wait-freedom and obstruction-freedom are special cases,
with the extreme values $m=n$ and $m=1$, respectively.
Like ordinary obstruction-freedom, $m$-obstruction-free
algorithms guarantee safety in all runs.  However, $m$-obstruction-freedom
provides a stronger progress property:  larger values 
of $m$ require less rigid constraints on the scheduler
in order to ensure progress.
Since $k$-set agreement has no wait-free solution among $k+1$
processes, it follows that there is no $m$-obstruction free solution when 
$m>k$.
The converse follows from the work of Yang, Neiger and Gafni \cite{YNG98}:
$m$-obstruction-free $k$-set agreement {\it can} be solved 
if $m\leq k$.
%
In this paper, we study how the number of registers required to
solve $m$-obstruction-free $k$-set agreement among $n$ processes
depends on the parameters $m,k$ and $n$.

Previously, the only non-trivial space lower bound was for the
very special case where $m=k=1$.  In this case,
Fich, Herlihy and Shavit~\cite{FHS98} showed $\Omega(\sqrt{n})$
registers are needed.  
The best upper bound for this case is the trivial one of $n$ registers,
which comes from the fact that $n$ (large) single-writer registers
can implement any number of multi-writer registers \cite{VA86}.
Closing the gap between the linear upper bound and the $\Omega(\sqrt{n})$ 
lower bound is a major open problem.  Unfortunately, there has been no progress on this gap in the past two decades.

We first prove nearly tight linear upper and lower bounds
on the number of registers required for {\it repeated} set agreement.
In many applications, such as Herlihy's universal construction \cite{Her91},
there is a sequence of (independent) agreement tasks that 
must be solved, rather than just one.  
We define the repeated $k$-set agreement problem to model this situation.
Processes access an infinite sequence of instances $k$-set agreement in order.  
For all executions and for all $i$, processes accessing the
$i$th instance of $k$-set agreement may output at most $k$ of the values
that are used as inputs to that instance.

We prove that any $m$-obstruction-free solution to 
repeated $k$-set agreement among $n$ processes requires
at least $n+m-k$ registers.
We also give a novel algorithm for this task using
$\min(n+2m-k,n)$ registers.
Previously, the only known set agreement 
algorithm that uses fewer than $n$ registers
was a 1-obstruction-free $k$-set agreement algorithm that uses 
$2n-2k$ registers \cite{DFGR13}.
Our algorithm generalizes that algorithm (to handle any value of $m$) and
improves the number of registers used in the case where $m=1$ from 
$2(n-k)$ to $n-k+2$.
For the case where $m=k=1$, 
our results establish that obstruction-free
repeated consensus requires exactly $n$ registers.
Thus, the gap between the $\Omega(\sqrt{n})$ lower bound and the $O(n)$ upper
bound is closed when we consider the {\it repeated} version of the problem.

For the one-shot version of $k$-set agreement, we focus on the restricted case of anonymous
systems, where processes
do not have unique identifiers and are all programmed identically.
We prove
that any anonymous algorithm must use more than 
$\sqrt{m(\frac{n}{k}-2)}$ registers.
The $\Omega(\sqrt{n})$ lower bound of Fich, Herlihy and Shavit~\cite{FHS98} (for the anonymous case) is a special case of our result with $m=k=1$, but
the new result gives additional insight into the problem by showing
a dependence on $m$ and $k$.  Moreover, the technique used in our proof
is somewhat different, since it requires building an execution
involving many different input values where each process is prevented
from learning about any input value different from its own.
We also prove that it is possible to solve the problem anonymously.
Our algorithm for the repeated version of the problem uses
$(m+1)(n-k)+m^2+1$ registers.
(The usual construction using $n$ single-writer registers
is not applicable, since it presupposes
unique identifiers.)

Figure \ref{summary} summarizes our results.
Our four main results are in boldface; the others are corollaries.

\begin{figure*}
\begin{center}
\begin{tabular}{|c|ll|ll|}\hline
			& \hspace*{25mm}Repeated&& \hspace*{28mm}One-shot &\\\hline
\rule{0pt}{3ex}%
Non-			& {\bf lower}: $n+m-k$ & (Section~\ref{sec-repeated-lower}) 
								& lower: 2 & \cite{DFGR13}\\
\raisebox{-2ex}{\rule{0pt}{5ex}}%
Anonymous	& {\bf upper}: $\min(n+2m-k,n)$ & (Section~\ref{linear-space-alg})
								& upper:	$\min(n+2m-k,n)$ & (Section~\ref{linear-space-alg})\\\hline
\rule{0pt}{3ex}%
Anonymous	& lower: $n+m-k$ & (Section~\ref{sec-repeated-lower}) 
								& {\bf lower}: $\sqrt{m(\frac{n}{k}-2)}$ for $D=\nat$ & (Section~\ref{m-conc-lower-sec}) \\
\raisebox{-2ex}{\rule{0pt}{5ex}}			
			& {\bf upper}: $(m+1)(n-k)+m^2+1$ & (Section~\ref{m-conc-alg-sec})
								& upper: $(m+1)(n-k)+m^2$ & (Section~\ref{m-conc-alg-sec})\\\hline
\end{tabular}
\end{center}
\caption{\label{summary}Lower and upper bounds on the number of registers  to solve $m$-obstruction-free $k$-set agreement among $n$ processes, where $1\leq m\leq k<n$ and input values are from domain $D$ (with $|D|>k$).  Our main results appear in boldface; the others are corollaries.}
\end{figure*}


\section{Preliminaries}
\indent

We consider the standard asynchronous shared-memory model, in which $n>1$
processes $p_1,\ldots , p_n$ communicate by applying read and write operations to shared \emph{registers}. The registers
are \emph{multi-writer} and \emph{multi-reader}, i.e., there are no restrictions on 
which processes may access which registers.

Each process has a local state that consists of the values stored in its
local variables and a programme counter.
A computation of the system proceeds in {\it steps} performed by the processes.  
Each step is one of the
following: (1)~an invocation of an operation, 
(2)~a read or write operation on a shared register, (3)~local
computation that results in a change of a process's state, or
(4)~a response of an operation.
Writes update the state of a shared register.
All steps may update the local state of the process that performs it.
A \emph{configuration} 
specifies the state of each register and
the local state of each process at one moment. 
In an \emph{initial configuration},
all registers have the initial values specified by the algorithm
and all processes are in their initial states.

A process is \emph{active} if 
an operation has been invoked on the process but the operation
has not yet produced a matching response; otherwise the process is
called \emph{idle}.  We assume that 
an operation can only be invoked on an idle process and only
active processes take steps.
We focus on deterministic algorithms.
Thus, given the current local state of an active process, the 
algorithm for this process stipulates the unique 
next \emph{step} the process can perform. 
An \emph{execution fragment} of an algorithm
is a (possibly infinite) sequence of
steps starting from some configuration that ``respects'' 
the algorithm for each process.
An \emph{execution} is an execution fragment that starts from the initial
configuration.
An operation is completed if its invocation is followed by
a matching response.
In an infinite execution, a process is \emph{correct} if it takes an infinite number of steps
or is idle from some point on. 

Our algorithms make use of multi-writer \emph{snapshot objects}~\cite{AAD93},
which can be implemented from registers.
A snapshot object stores a vector of $r$ values and 
provide two atomic operations:
$\textit{update}(i,v)$ ($i\in\{1,\ldots,r\}$), which writes value $v$ to
component $i$, and $\textit{scan}()$, 
which returns the vector of the most recently written values to
components $1,\ldots,r$.




\medskip

\subsection{Set agreement}
We begin with a formal definition of the \emph{repeated $k$-set agreement} problem.
Processes may perform  {\sc Propose}($v$) operations, where $v$ is drawn from an input domain $D$. 
Each {\sc Propose} operation outputs a response from $D$ when it terminates. 
For an execution $\alpha$, let $In_i(\alpha)$ be the set of values that are used as the
argument to some process's $i$th invocation of {\sc Propose} and let $Out_i(\alpha)$ be the
set of values that are the response of some process's $i$th  {\sc Propose} operation.  
Then,  in every execution $\alpha$ of an algorithm that solves repeated
$k$-set agreement the following properties must hold.
\begin{itemize}
\item
\emph{Validity}:  $\forall i$, $Out_i(\alpha)\subseteq In_i(\alpha)$.
\item
\emph{$k$-Agreement}:  $\forall i$, $|Out_i(\alpha)|\leq k$.
\end{itemize}
An $m$-obstruction-free algorithm must additionally satisfy the following termination condition.
\begin{itemize}
\item
\emph{$m$-Obstruction-Freedom}: 
in every execution in which 
at most $m$ processes take infinitely many steps, 
every correct process
completes each of its  operations. 
\end{itemize}

The special case when $k=1$ is called \emph{consensus}.
In the \emph{(one-shot) $k$-set agreement problem}, every process
invokes {\sc Propose} at most once.

\ignore{
\begin{theorem}\label{thm:kset-mof}
There is no algorithm solving $m$-obstruction-free 
$k$-set-agreement using registers if $k<m$.
\end{theorem}
\begin{proof}
The fact that $(k+1)$-process $k$-set agreement cannot
be solved~\cite{HS99,BG93b,SZ00} implies that
$(k+1)$-obstruction-free (and, thus, $k'$-obstruction-free for $k'\geq
k$) cannot be solved: it is sufficient to consider the set of
executions in which only the first $k+1$ processes are active.   

\end{proof}

It is then straightforward to derive that: 
}

It is known that wait-free $(k+1)$-process $k$-set agreement cannot
be solved using registers~\cite{BG93b,HS99,SZ00}.  This implies the following lemma,
which we shall use to prove our space lower bounds.

\begin{lemma}\label{lem:m-val}
Let $A$ be any algorithm that solves $m$-obstruction-free $k$-set
agreement using registers.  
For any set $V$ of $m$ input values and any set $Q$ of $m$ processes, there is an execution of $A$ in
which only processes in $Q$ take steps and all values in $V$ are output.
\end{lemma}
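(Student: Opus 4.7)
The plan is to reduce the statement directly to the cited impossibility of wait-free $(k'{+}1)$-process $k'$-set agreement using registers, instantiated with $k'=m-1$.

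Fix $Q=\{q_1,\ldots,q_m\}$ and an enumeration $V=\{v_1,\ldots,v_m\}$. I would consider the set $\mathcal{E}$ of executions of $A$ in which only the processes of $Q$ take steps and in which $q_i$'s first \textsc{Propose} invocation uses input $v_i$; for the repeated version I further restrict to executions in which each $q_i$ invokes \textsc{Propose} exactly once, reducing the situation to a one-shot instance. In every execution in $\mathcal{E}$, at most $m$ processes ever participate, so $m$-obstruction-freedom forces every correct process of $Q$ to terminate: restricted to this scenario, $A$ is wait-free. By validity, the output of each $q_i$ lies in $V$, hence the set of outputs produced in any execution in $\mathcal{E}$ is a subset of $V$.

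The main step is a proof by contradiction. Suppose no execution in $\mathcal{E}$ outputs every element of $V$. Then in each execution in $\mathcal{E}$ the set of outputs has cardinality at most $m-1$, while all outputs lie in $V$ and validity is preserved. This means that, restricted to $Q$ with the fixed input assignment $q_i\mapsto v_i$, $A$ is a wait-free register-based protocol among $m=(m{-}1)+1$ processes that solves $(m{-}1)$-set agreement on an input configuration consisting of the $m$ pairwise distinct values $v_1,\ldots,v_m$. The classical impossibility proof of wait-free $(k'{+}1)$-process $k'$-set agreement already produces its contradiction on precisely such a fixed input configuration, so this contradicts \cite{BG93b,HS99,SZ00}. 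Hence some execution in $\mathcal{E}$ outputs every value of $V$, and by construction only processes of $Q$ take steps in that execution.

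The delicate point is this final appeal: I need the cited result to rule out the restricted protocol on the single input assignment $q_i\mapsto v_i$, rather than only ruling out protocols required to work on every input assignment. Since all standard proofs of the $(k'{+}1)$-process $k'$-set-agreement impossibility derive their contradiction from executions on a fixed bijective input assignment (essentially, a $(k'{+}1)$-vertex input simplex with pairwise distinct inputs), no additional work is needed and the reduction closes cleanly.
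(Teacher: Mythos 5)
Your proof is correct and is essentially the paper's argument: assume the conclusion fails for some $V$ and $Q$, observe that $m$-obstruction-freedom makes $A$ wait-free when only the $m$ processes of $Q$ take steps, and conclude that $A$ restricted to $Q$ yields a wait-free $m$-process $(m-1)$-set agreement protocol, contradicting the cited impossibility. The only (cosmetic) difference is that the paper quantifies over all executions of $Q$ with inputs drawn from $V$ rather than fixing a single bijective input assignment, which turns the restriction into a bona fide wait-free $(m-1)$-set agreement algorithm on the $m$-element domain $V$ and thereby avoids the white-box appeal to the internal structure of the impossibility proofs that you flag as the delicate point.
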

\begin{proof}
Suppose the opposite for some sets $V$ and $Q$  and consider all executions of $A$ in which only 
processes in $Q$ with inputs in $V$ take steps. By the assumption, at most $m-1$ distinct
values are decided in each of these executions, which implies a wait-free
$m$-process $(m-1)$-set agreement algorithm, violating~\cite{BG93b,HS99,SZ00}.   
\end{proof}

Lemma~\ref{lem:m-val} implies that no algorithm
can solve  $m$-obstruction-free $k$-set agreement using registers if $k<m$.  
In the rest of the paper, we derive lower and upper bounds on the
space complexity of $m$-obstruction-free 
$k$-set agreement for $n$ processes, where $m\leq k < n$.  
(If $k\geq n$, the problem is trivial and no registers are required: each process can simply output its own input value.)


\section{Lower Bound for Repeated Set Agreement}
\label{sec-repeated-lower}
\indent

In this section, we prove that solving $m$-obstruction-free repeated $k$-set agreement among $n$ processes requires at least $n+m-k$ registers.  Since the proof is technical, we first provide
a brief overview.  For simplicity, assume for now 
that $k+1$ is a multiple of $m$.
We assume that there is an algorithm
that uses fewer than $n+m-k$ registers, and construct
an execution in which processes return $k+1$ different values in
some instance of set agreement, contradicting the $k$-agreement property.
The proof first constructs $c=\frac{k+1}{m}$ disjoint sets $Q_1,Q_2, \ldots, Q_{c}$ of $m$ processes each, and an execution $\alpha$ 
that passes through a sequence
of configurations $D_1,D_2, \ldots, D_{c}$ with the following property. 
For $1\leq i<c$, 
{\it every possible} execution fragment by the processes in $Q_i$ starting from $D_i$ 
writes only to registers that are overwritten immediately after $D_i$ 
in $\alpha$.
Moreover, processes in $Q_i$ take no more steps after $D_i$ in $\alpha$.
We can then splice into $\alpha$ 
any execution fragment by processes in $Q_i$ at $D_i$, knowing that 
the rest of $\alpha$ will not be affected, since all evidence
of the inserted steps will be overwritten.
For each group $Q_i$, the fragment we splice into $\alpha$
accesses
a ``fresh'' instance of set agreement that was never accessed during
$\alpha$.  (In each fragment that is spliced in, only the $m$
processes in $Q_i$ take steps, so all {\sc Propose} operations terminate
and the processes will eventually reach and complete 
the fresh instance of set agreement.)
We ensure that these groups of $m$ processes output
disjoint sets of $m$ different values each for this one instance of set agreement, for a total of  $c\cdot m = k+1$ different outputs, a contradiction.

\begin{theorem}
\label{repeated-lower-bound}
Any algorithm for $m$-obstruction-free repeated $k$-set agreement among $n$ processes requires at least $n+m-k$ registers.
\end{theorem}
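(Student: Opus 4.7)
The plan is to follow the strategy outlined just before the theorem. I would suppose for contradiction that some $m$-obstruction-free repeated $k$-set agreement algorithm uses only $R\leq n+m-k-1$ registers. Fix $c=\lceil(k+1)/m\rceil$ pairwise disjoint process groups $Q_1,\dots,Q_c$ of size $m$ (assuming for simplicity that $m$ divides $k+1$, so $cm=k+1$; the general case needs only a minor adjustment at the last group), let $P'$ be the remaining $n-(k+1)$ processes, and for each $i$ let $\gamma_i$ be a solo execution of $Q_i$ on a single fresh instance of set agreement producing $m$ distinct values from an input set $V_i$, with the $V_i$'s chosen pairwise disjoint. Existence of $\gamma_i$, starting from any configuration in which the fresh instance has not yet been touched, follows from Lemma~\ref{lem:m-val} applied to the algorithm restricted to that instance. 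Collectively the $\gamma_i$'s produce at least $k+1$ distinct outputs on this one common instance.

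Next I would construct $\alpha$ and the configurations $D_1,\dots,D_c$ inductively, maintaining for each $i<c$ the invariants that (a) no process of $Q_i$ has taken any step of $\alpha$ up to $D_i$; (b) no process has yet invoked the fresh instance targeted by the $\gamma_j$'s; and (c) the steps of $\alpha$ immediately following $D_i$ form a block $B_i$ of $R$ writes, one per register, performed by processes outside $Q_i$, whose values equal the register contents at $D_i$ so that executing $B_i$ restores the pre-$D_i$ register state. Granted (a)--(c), splicing $\gamma_i$ into $\alpha$ at $D_i$ leaves the suffix of $\alpha$ valid because $B_i$ absorbs every write $\gamma_i$ could perform.

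The main obstacle is assembling the pending writes in (c). At $D_i$ the covering writers must lie in $P'\cup\bigcup_{j\neq i}Q_j$, and processes in $Q_j$ for $j>i$ may be used even though they will later be spliced, because $\gamma_j$ is taken to start from $D_j$ and so is insensitive to local-state changes that $Q_j$ incurred earlier in $\alpha$. The tightest constraint arises at $i=c-1$, where the usable pool is $P'\cup Q_c$ of size $|P'|+m=n-k-1+m=n+m-k-1$, exactly matching the assumed bound on $R$. Moreover, each pending write must carry a value equal to the current register contents; arranging this requires interleaving solo steps so that each would-be writer first reads the register it intends to cover and then freezes immediately before its write. Choreographing these solo steps inductively so that all $R$ pending writes simultaneously agree with the register state at $D_i$, without perturbing invariants (a) or (b), is the delicate technical core of the construction.

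Finally, simultaneously splicing every $\gamma_i$ into $\alpha$ at the corresponding $D_i$ (the intervals between consecutive $D_i$'s being disjoint) yields a valid execution of $A$ in which all $cm=k+1$ processes of $\bigcup_i Q_i$ complete $\gamma_i$ and decide on the common fresh instance, producing at least $k+1$ distinct outputs. This contradicts $k$-agreement and establishes the lower bound.
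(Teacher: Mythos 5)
Your high-level plan (covering configurations, splicing solo runs of disjoint $m$-process groups onto a fresh instance, forcing $k+1$ distinct outputs) matches the intended argument, but the core of the construction is missing, and the specific covering scheme you propose does not work. First, you require the block $B_i$ to cover \emph{all} $R$ registers and to \emph{restore} the register contents at $D_i$. Neither is achievable. The algorithm is deterministic, so you cannot steer a process to become poised on a register of your choosing, let alone with a prescribed value; your read-then-freeze trick fails because a process's pending write value is dictated by its entire local state, not by the last value it read, and the register may change between that read and the pending write. Moreover, covering all $R$ registers needs $R=n+m-k-1$ coverers disjoint from $Q_i$, while your pool $P'\cup\bigcup_{j\neq i}Q_j$ has size $n-m$, which is smaller than $R$ whenever $2m>k+1$ (e.g., $m=k\geq 2$). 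Second, restoring the pre-$D_i$ register state is unnecessary: since the suffix of $\alpha$ is constructed \emph{after} the block write, all that is needed is that the configuration reached after the block be the same with or without the spliced fragment, which holds as soon as the block overwrites every register the fragment could possibly touch.

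The missing idea is how to obtain a covering configuration without being able to control where processes write. The paper builds the coverer set $P_i$ and a register set $A_i$ \emph{together} by iterating: run the current group $Q_i$ from $D_i$ until some process $q$ is first poised to write a register outside $A_i$; freeze $q$ as a coverer of that register (adding the register to $A_i$ and $q$ to $P_i$), replace $q$ in $Q_i$ by a fresh process, and repeat until no $Q_i$-only fragment from $D_i$ can write outside $A_i$. This terminates within $r$ iterations, yields $|P_i|=|A_i|\leq r$, and makes the key property --- every solo run of $Q_i$ from $D_i$ writes only inside $A_i$ --- true by construction rather than by covering everything; the block write then only needs to hit $A_i$, with whatever values the frozen processes happen to be carrying. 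The process counting then closes with $r=n+m-k-1$ precisely because only $|A_i|\leq r$ coverers are ever consumed. Without this mechanism (or an equivalent one), the step you defer as ``the delicate technical core'' is exactly the proof, and your proposal does not supply it.
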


\begin{proof}
To derive a contradiction, assume there exists an algorithm for $m$-obstruction-free repeated $k$-set agreement using $r=n+m-k-1$ registers.
Let $c=\ceil{\frac{k+1}{m}}$.  Since $k\geq m$, we have $c\geq 2$.
We define a {\it block write} to a set $A$ of registers by a set $P$ of processes
to be an execution fragment in which each process of $P$ takes a single step,
such that the set of registers written during the fragment is $A$.

We first construct
an execution
\begin{equation}
\label{execution}
C_0 \goes{\alpha_1} D_1 \goes{\beta_1} C_1 \goes{\alpha_2} D_2 \goes{\beta_2} C_2 \goes{\alpha_3} \cdots \goes{\beta_{c-1}} C_{c-1}
\end{equation}
and sets $A_1,\ldots,A_{c-1}$
of registers such that $C_0$ is the initial configuration and for all $j$,
\begin{enumerate}
\item
\label{alphaCD}
$\alpha_j$ is an execution fragment containing only steps by two disjoint sets $P_j$ and $Q_j$ of processes that goes from configuration $C_{j-1}$ to configuration $D_j$,
\item
\label{betaDC}
$\beta_j$ is a block write to $A_j$ by $P_j$ that goes from configuration $D_j$ to configuration $C_j$,
\item
\label{Q1size}
$|Q_1| = k+1-(c-1)m$,
\item
\label{Qjsize}
if $j>1$, $|Q_j|=m$,
\item
\label{PAsize}
$|P_j|=|A_j|$,
\item
\label{Qdisjoint}
$Q_j \cap Q_{j'} =\emptyset$ for $j'\neq j$,
\item
\label{PQdisjoint}
$Q_j\cap P_{j'} =\emptyset$ for $j'>j$, and
\item
\label{no-continuation}
there is 
no execution fragment starting from $D_j$ in which only processes in $Q_j$ take steps and some process writes outside $A_j$.
\end{enumerate}

{\sc Base case} ($j=0$):  Let $C_0$ be the initial configuration.

\smallskip

{\sc Inductive step}:  Let $1\leq j\leq c-1$.  Assume we have constructed the execution from $C_0$ to $C_{j-1}$ satisfying all the properties.
The algorithm in Figure \ref{construction-alg} constructs the execution fragment $\alpha_j$
and the sets $P_j$, $Q_j$ and $A_j$.
Then, let $\beta_j$  be the execution fragment starting from $D_j$ where each process in $P_j$ takes a single step and let $C_j$ be the resulting configuration.

\begin{figure*}
\setcounter{linenum}{0}
\begin{code}
\firstline
let $\alpha_j$ be the empty execution fragment\nl
$D_j \leftarrow C_{j-1}$\nl
$P_j\leftarrow \emptyset$\nl
$A_{j}\leftarrow \emptyset$\nl
if $j>1$ then $size \leftarrow m$ else $size \leftarrow k+1-(c-1)m$\nl
let $Q_j$ be a set of $size$ processes disjoint from $Q_1 \cup Q_2 \cup \cdots \cup Q_{j-1}$
\label{choose-proc1}\nl
loop until no execution fragment starting from $D_j$ by $Q_j$ writes outside $A_j$\nl
\n   let $\delta$ be an execution fragment starting from $D_j$ by $Q_j$ until some process $q\in Q_j$ is poised for\ul
\>		  the first time to write to a register that is not in $A_j$ and let $R$ be that register\nl
     $\alpha_j \leftarrow \alpha_j \cdot \delta$\nl
     let $D_j$ be the configuration reached from $C_{j-1}$ by performing $\alpha_j$\label{updateDj}\nl
     let $q'$ be some process outside $Q_1\cup Q_2\cup \cdots \cup Q_j \cup P_j$\label{choose-proc2}\nl
	 $A_j \leftarrow A_j \cup \{R\}$\label{updateAj}\nl
     $P_j \leftarrow P_j \cup \{q\}$\label{updatePj}\nl
     $Q_j \leftarrow (Q_j - \{q\}) \cup \{q'\}$\label{updateQj}\nl
\p end loop\nl
output $\alpha_j,D_j,P_j,Q_j,A_j$
\end{code}
\caption{Algorithm used in 
the proof of Theorem \ref{repeated-lower-bound} to construct $\alpha_j, D_j, P_j, Q_j$ and $A_j$.\label{construction-alg}}
\end{figure*}

Observe that the construction algorithm terminates:
each loop iteration adds a new register to $A_j$,
so it terminates after at most $r$ iterations.
We next check that the required processes on line
\lref{choose-proc1} and \lref{choose-proc2} exist.
When $j=1$, we have $size=k+1-(c-1)m =k+1-\ceil{\frac{k+1}{m}}\cdot m +m\leq m < n$, so one can choose the required processes on line~\lref{choose-proc1}.
For $j>1$, one can choose the process on line~\lref{choose-proc1} because 
\begin{eqnarray*}
|Q_1\cup\cdots\cup Q_{j-1}| 
&=& k+1-(c-1)m + (j-2)m \\
&&\mbox{(by induction hypothesis \ref{Q1size}, \ref{Qjsize} and \ref{Qdisjoint})}\\
&\leq& k+1 - (c-1)m + (c-3)m\\
&& (\mbox{since }j\leq c-1)\\ 
&=& k+1-2m 
\leq  n-2m\\
&&(\mbox{since }k<n).
\end{eqnarray*}
Similarly, one can choose the required process $q'$ at line \lref{choose-proc2} because
\vspace*{-1mm}
\begin{eqnarray*}
&&|Q_1\cup \cdots\cup Q_j\cup P_j|\\
&\leq& k+1-2m +|Q_j| + |P_j|  \\
&&(\mbox{since } |Q_1\cup\cdots\cup Q_{j-1}|\leq k+1-2m)\\
&\leq & k+1-m + r - 1 \\
&&(\mbox{since $|Q_j|=m$ and }|P_j|=|A_j| \leq r-1)\\
&=& n -1 \\
&&(\mbox{since }r=n+m-k-1).
\end{eqnarray*}

We verify the construction satisfies all of the properties.
Line \lref{updateDj} of the algorithm updates $D_j$ each time $\alpha_j$ is updated, to ensure  property 1.
Property \ref{betaDC} is true by definition of $\beta_j$ and $C_j$.
$Q_j$ is initialized to a set whose size satisfies property \ref{Q1size} or \ref{Qjsize} on line \lref{choose-proc1} and the size of this set is preserved whenever $Q_j$ is altered on line \lref{updateQj}.
$P_j$ and $A_j$ are initialized to be empty, and both are updated by adding one element to each on line \lref{updateAj} and \lref{updatePj}, so they remain the same size after every iteration of the loop.  (Note that $P_j$ and $Q_j$ are disjoint at the beginning of each iteration of the loop, so line \lref{updatePj} does add a new process to $P_j$.)
Every process placed in $Q_j$ at line \lref{choose-proc1} or \lref{updateQj} was chosen to be outside $Q_1\cup \ldots\cup Q_{j-1}$, guaranteeing property
\ref{Qdisjoint}.
Similarly, processes added to $P_j$ are always outside $Q_1\cup\ldots\cup Q_{j-1}$, and whenever a process is added to $P_j$, it is removed from $Q_j$, so property \ref{PQdisjoint} is satisfied.
Finally, property \ref{no-continuation} is guaranteed by the exit condition of the loop.
This completes the inductive construction.

\smallskip

Now, let $s$ be the maximum number of invocations of {\sc Propose} by any process in the execution that takes the system to configuration $C_{c-1}$.
Let $Q_c$ be a set of $m$ processes disjoint from $Q_1\cup\cdots \cup Q_{c-1}$.
(These $m$ processes exist since $|Q_1\cup \cdots\cup Q_{c-1}| = k+1-m \leq n-m$.)
Let $D_c=C_{c-1}$.

For each $j\in\{1,\ldots,c\}$, we now construct an execution fragment $\gamma_j$
by the processes in $Q_j$ starting from $D_j$.
Since $|Q_j|\leq m$, each {\sc Propose} in $\gamma_j$ must terminate.
First, the processes in $Q_j$ run one by one until each
completes its first $s$ invocations of {\sc Propose}.
Then, the processes of $Q_j$ run their $(s+1)$th invocation of {\sc Propose},
each using its own id as its input value so that they decide $|Q_j|$ different output values.  
By Lemma~\ref{lem:m-val}, such an execution fragment exists.
Note that for $j<c$, $\gamma_j$ cannot write outside of $A_j$, by property \ref{no-continuation}.  So, all traces of $\gamma_j$'s activity are obliterated by the block
write $\beta_j$.  Thus, we can insert $\gamma_1,\ldots,\gamma_{c}$ into
execution (\ref{execution}) at $D_1,\ldots,D_c$, respectively, 
and the resulting execution
is still legal.  In the resulting execution, the number of distinct outputs
for the $(s+1)$th instance of set agreement is $\sum\limits_{j=1}^c|Q_j| = k+1$, violating $k$-agreement.
This completes the proof.
\end{proof}

\section{Algorithm for Repeated Set Agreement}
\label{linear-space-alg}

\subsection{One-shot $k$-set agreement}
We first give an 
algorithm that uses a snapshot object of $r=n+2m-k$ components to solve 
(one-shot) $m$-obstruction-free $k$-set agreement, and then describe
how to extend it to solve repeated set agreement.
The one-shot algorithm is shown in Figure~\ref{two-copies}.
Roughly speaking, the first $k-m$ processes to decide
can output arbitrary values, but we ensure that the last $\ell=n-k+m$ processes
all agree on at most $m$ different values (for a total of at most $k$ different values).

Each process stores its preferred value in its local variable {\it pref}.  Initially, it prefers
its own input value.  
Each process executes a loop in which it stores its {\it pref} and identifier into
a component of the snapshot object, takes a scan of the snapshot object and updates
its {\it pref} variable based on the scan.
The location $i$ that the process updates  advances in each iteration of the loop, as long
as the process's {\it pref} value remains the same.
When the process updates its {\it pref}, it does not advance to the next location:  instead
it updates  the same location during the next iteration of the loop.

The process repeats this loop until a scan returns a vector containing at most $m$ different 
value-id pairs, at which point it returns one of those values.
In each iteration, a process updates its {\it pref} value when it does not see any copies of its current
value-id pair anywhere in the vector returned by the scan, except for the component
it just updated, {\it and} it does see two copies of some other pair.
In this case, it adopts the value of the pair that appears twice as its {\it pref}.

The algorithm in Figure \ref{two-copies} is an improvement on the algorithm
of \cite{DFGR13}, which was designed for the special case where $m=1$ and uses $2(n-k)$
registers, compared to the $n-k+2$ registers used by ours.

\setcounter{linenum}{0}
\begin{figure*}
\begin{code}
\firstline
Shared variable:\nl
\n $A$: snapshot object with $r=n+2m-k$ components, each initially $\bot$\bl\nl
\p {\sc Propose}($v$)\nl
\n  {\it pref} $\leftarrow v$\nl
    $i\leftarrow 0$\nl
    loop \label{beginloop}\nl
\n      update $i$th component of $A$ with $(\mbox{\it pref},id)$\label{write-pref}\nl
        $s\leftarrow $ scan of $A$ \label{snap}\nl
        if $|\{s[j] : 0\leq j < r\}| \leq m$  and $\forall j$, $s[j]\neq\bot$  then \nl
        \n let $j_1 \leftarrow \min\{j_1 : \exists j_2>j_1 \mbox{ such
          that }s[j_1]=s[j_2]\}$, output value in $s[j_1]$ and halt\label{output}\nl

\p if $\forall j\neq i, s[j]\notin\{\bot, (\mbox{\it pref},id)\}$ and
$\exists j_1\neq j_2$ such that $s[j_1]=s[j_2]$ then \label{cond}\nl 
 \n	$j_1\leftarrow \min\{j_1 : \exists j_2>j_1 \mbox{ such that }s[j_1]=s[j_2]\}$\nl
            {\it pref} $\leftarrow$ value in $s[j_1]$ \label{change-pref}\nl
\p      else $i\leftarrow (i+1) \mymod r \label{change-ind}$\nl
\p  end loop\nl
\p end {\sc Propose}
\end{code}
\caption{Algorithm for $m$-obstruction-free $k$-set agreement.  Code for a process with identifier $id$.\label{two-copies}}
\end{figure*}

We now prove that the algorithm in Figure~\ref{two-copies} indeed
solves $m$-obstruction-free $k$-set agreement. 
It is easy to see that {\bf validity} holds: the only values that can appear in the snapshot object or in a process's local {\it pref} variable are input values.  Thus, only input values can
be produced as outputs.  

Before proving $k$-agreement and termination, we 
first establish the following invariant. 
\begin{lemma}\label{invOne}
For each process identifier $id$, all the pairs in $A$ with identifier $id$ have the same value. 
\end{lemma}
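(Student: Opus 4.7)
The plan is to prove the invariant by induction on the length of a prefix of the execution, with the key observation that only the process whose identifier is $id$ ever writes a pair containing $id$ into $A$ (the write on line~\ref{write-pref} always tags the pair with the writer's own $id$). Thus the only steps that can disturb the invariant for a given $id$ are writes performed by the process $p$ owning that $id$. At the initial configuration every component is $\bot$ so the invariant holds vacuously.

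For the inductive step, I consider an arbitrary write step by $p$ on line~\ref{write-pref}, which places $(v, id)$ into some component $i$. I split into two cases depending on what $p$ did between its previous write (if any) and the current one. In the easy case, $p$ took the \textbf{else} branch on line~\ref{change-ind} at the previous iteration, so $v$ equals the value $p$ wrote last time; by the inductive hypothesis every pair with identifier $id$ currently in $A$ already has value $v$, and after writing into component $i$ this remains true. In the other case $p$ executed line~\ref{change-pref} in the previous iteration, so $v$ is the newly adopted \emph{pref}; crucially, since line~\ref{change-ind} was \emph{not} executed, the index $i$ is the same as in the previous iteration.

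The main content of the proof lies in this second case. The guard on line~\ref{cond} tells us that at the scan which caused \emph{pref} to change, the previous value $v'$ stored in \emph{pref} satisfied $s[j] \notin \{\bot, (v', id)\}$ for all $j \neq i$. By the inductive hypothesis applied at the moment of that scan, every pair with identifier $id$ present in $A$ at that time had value $v'$ (namely the value $p$ most recently wrote). Combining these two facts, no component $j \neq i$ contained a pair bearing $id$ at the scan. Between the scan and the current write, no process other than $p$ can introduce a pair with identifier $id$, and $p$ itself has not yet written; hence immediately before the current write, component $i$ is the only component of $A$ that possibly carries $id$. After $p$ overwrites it with $(v, id)$, every pair in $A$ with identifier $id$ has value $v$, preserving the invariant.

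The subtle point I would be careful about is that component $i$ could in the meantime have been overwritten by some other process, so at the scan $s[i]$ need not itself equal $(v', id)$. This does not matter, because the argument only needs the fact that every $j \neq i$ is free of $id$-pairs at the scan, together with the observation that no further $id$-pair can appear in $A$ until $p$ itself writes.
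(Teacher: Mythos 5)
Your proof is correct, and it takes a genuinely different route from the paper's. The paper argues by contradiction: given a configuration with two differently-valued pairs $(v_1,id)$ and $(v_2,id)$, it considers the last updates $u_1$ and $u_2$ of the two components, locates the first change of \emph{pref} after $u_1$, and uses the guard on line~\lref{cond} to force that change to happen while $p_{id}$'s index is still $i_1$, so the very next update overwrites $A[i_1]$ and contradicts the maximality of $u_1$. You instead run a forward induction over the execution and maintain the invariant directly; the work is done by the same guard on line~\lref{cond}, but you use it to conclude that at the moment \emph{pref} changes, no component other than the one $p_{id}$ is about to overwrite carries an $id$-pair. Both arguments ultimately rest on the two facts you identify: only $p_{id}$ ever writes $id$-pairs, and the guard cannot pass while a stale copy of $(\mbox{\it pref},id)$ survives elsewhere. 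Your version is slightly longer to state but arguably more transparent, and it is the shape of argument that strengthens gracefully (e.g.\ to the $t$-tuple variant, Lemma~\ref{invOne:rep}). Two small elisions are worth making explicit: first, your cases silently use that the common value promised by the induction hypothesis equals the value $p$ last wrote --- this holds because immediately after $p$'s previous execution of line~\lref{write-pref} the freshly written component witnesses the common value, and no new $id$-pair can enter $A$ before $p$'s next write; your parenthetical ``(namely the value $p$ most recently wrote)'' gestures at this but it deserves a sentence. Second, $p$'s very first write is a (trivial) third case, since $A$ then contains no $id$-pairs at all. Neither is a real gap.
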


\begin{proof}
To derive a contradiction, assume there is an execution that reaches 
a configuration $C$ in which
$A[i_1]=(v_1,id)$ and $A[i_2]=(v_2,id)$ where $v_1\neq v_2$.
Let $p_{id}$ be the process with identifier $id$.
Let $u_1$ and $u_2$ be the last steps before $C$ in which $p_{id}$ updates 
$A[i_1]$ and $A[i_2]$, respectively.
Without loss of generality, assume $u_1$ is before $u_2$.
Then, between $u_1$ and $u_2$, $p_{id}$ changes its {\it pref}
variable at line~\lref{change-pref}.  Consider the first time after
$u_1$ that $p_{id}$ performs such a change, and let
$i^*$ and $s^*$ be the values of $p_{id}$'s local variables $i$ and $s$ at that time.
Since $s^*$ was obtained from a scan between $u_1$ and $C$ and $A[i_1]=(v_1,id)$ throughout that interval, $s^*[i_1]$ is $(v_1,id)$.
Thus, $i^*=i_1$; otherwise the test on line \lref{cond} would not be satisfied,
and $p_{id}$ would not change {\it pref} at line~\lref{change-pref}.
Therefore, in the next iteration of the loop, $p_{id}$ will update location $A[i_1]$.
This update is after $u_1$ and no later than $u_2$ (and hence before
$C$), which contradicts the definition of $u_1$ as the last update
performed by $p_{id}$ on $A[i_1]$ before~$C$.
\end{proof}

To prove {\bf $k$-agreement}, 
let $\ell=n-k+m$. 
If at most $n-\ell$ processes decide, then $k$-agreement is trivial since
$n-\ell = k-m < k$.
So, consider an execution in which more than $n-\ell$ processes decide.
Order the processes that decide according to the times when each performs its last scan,
and let $q_0$ be the $(n-\ell+1)$th process in this ordering.
Let $X$ be the set of at most $m$ different pairs that appear in the vector that
$q_0$'s final scan returns. Let $V$ be the set of values that appear in pairs of $X$. 
Then, $|V|\leq|X|\leq m$.
We prove that $q_0$ and all processes that come later in the ordering
output values in~$V$.
Thus, the total number of values output is at most $(n-\ell) + |V| \leq n-(n-k+m)+m = k$.

\begin{lemma}\label{lem:safety}
In any configuration after $q_0$ performs its final scan, 
only pairs with values in $V$ can appear in two or more locations of $A$.
\end{lemma}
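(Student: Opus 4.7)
The plan is to prove this by induction on configurations reached after $q_0$ performs its final scan, with the inductive invariant that no pair $(w,id)$ with $w\notin V$ appears in two or more components of $A$. The base case is immediate from the atomicity of the snapshot: right after $q_0$'s scan linearizes, every component of $A$ matches what the scan returned, which by the halting condition on line \lref{output} consists of at most $m$ distinct pairs, all with values in $V$. In particular, no pair with value outside $V$ is present at all.

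The crucial enabler for the inductive step is a counting argument. At $q_0$'s scan, $A$ contains at most $m$ distinct pairs and hence, by Lemma \ref{invOne}, at most $m$ distinct identifiers. After $q_0$'s scan, a new identifier $id'$ can enter $A$ only when process $p_{id'}$ writes, but the $n-\ell=k-m$ processes that decided before $q_0$, along with $q_0$ itself, perform no further writes. So any additional identifiers come from at most $\ell-1=n-k+m-1$ remaining active processes, giving a bound of $m+(\ell-1)=r-1$ on the number of distinct identifiers (and hence, by Lemma \ref{invOne}, distinct pairs) in $A$ throughout the rest of the execution. Since $A$ has $r$ components, at least one pair is always duplicated. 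Moreover, no component ever reverts to $\bot$. Consequently, at every scan taken after $q_0$'s scan the snapshot contains no $\bot$ and some duplicated pair, so the only way the test on line \lref{cond} can fail for a process $p$ with pref $w$ and index $i$ is that $(w,id_p)$ appears in $s[j]$ for some $j\neq i$.

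For the inductive step, only write steps modify $A$, and a write of $(w,id_p)$ with $w\in V$ trivially preserves the invariant. For a write of $(w,id_p)$ with $w\notin V$ to $A[i]$, I must show that $(w,id_p)$ does not already occupy any $A[j]$ with $j\neq i$. Since $q_0$'s scan contained only values in $V$, every earlier copy of $(w,id_p)$ had been overwritten by $q_0$'s scan time, and only $p$ itself can ever write a pair with identifier $id_p$. A case analysis on $p$'s program counter at $q_0$'s scan shows that at the first post-$q_0$-scan scan that $p$ performs after its first post-$q_0$-scan write of the bad pair, $(w,id_p)$ can only reside at $p$'s own current index $i$, so all parts of the test on line \lref{cond} are satisfied and $p$ must change its pref at line \lref{change-pref}. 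The value it adopts is the value of a duplicated pair in its snapshot, which lies in $V$ by the inductive invariant, so $p$'s pref enters $V$ and never leaves it thereafter. Hence $p$ writes the bad pair at most once after $q_0$'s scan and no bad duplicate is ever formed. The main obstacle will be the subcase in which $p$ is poised to execute line \lref{change-pref} at $q_0$'s scan using a snapshot taken before $q_0$'s scan: the newly adopted pref can in principle lie outside $V$ (the inductive invariant does not constrain pre-$q_0$-scan snapshots), but applying the same reasoning one iteration later shows that $p$ still performs only a single post-$q_0$-scan write of a bad pair before its next post-$q_0$-scan scan forces the pref into $V$.
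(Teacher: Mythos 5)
Your proof is correct and follows essentially the same route as the paper's: the same induction over configurations after $q_0$'s final scan, the same counting argument (at most $m+\ell-1=r-1$ distinct pairs by Lemma~\ref{invOne}, so every subsequent scan is $\bot$-free and contains a duplicated pair), and the same conclusion that the test on line~\lref{cond} can then fail only because the process's own current pair occupies another index. The only difference is packaging: the paper argues directly that any non-first post-$C_0$ update must write a value in $V$, whereas you bound each process to a single post-$C_0$ write of a bad pair so that no bad duplicate can form—the same idea in contrapositive form.
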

\begin{proof}
Let $C_0$ be the configuration just after $q_0$'s final scan.
We shall show by induction that in each configuration
reachable from $C_0$, 
only pairs with values in $V$ can appear in two or more locations of~$A$.
For the base case, consider the configuration $C_0$.
By the definition of $V$, $A$ contains only pairs with values in $V$,  
so the claim holds.

For the induction step, suppose the claim holds in all configurations 
from $C_0$ to some configuration $C_1$ reachable from $C_0$.
Let $st$ be a step that takes the system from $C_1$ to another configuration $C_2$.
We show that the claim holds in configuration $C_2$.
We need only consider the case where $st$ is an update by some process 
$p_{id}$.  Let $(v,id)$ be the pair that $st$ stores in a component of $A$.

{\sc Case 1}: $st$ is the first update by $p_{id}$ after $C_0$. 
If $v\in V$, then $st$ cannot cause a violation of the claim.
If $v\notin V$, then $A$ contains exactly one copy of $(v,id)$ in configuration
$C_2$, since $(v,id)\notin X$, so again $st$ preserves the claim.

{\sc Case 2}: $st$ is not the first update by $p_{id}$ after $C_0$.
Let $s_{id}$ be the vector obtained by $p_{id}$'s
last scan before $st$.
We show that $v\in V$, and hence $st$ preserves the claim, by considering two subcases.

{\sc Case 2a}:  $s_{id}$ satisfies the condition on line \lref{cond}. 
Then, $p_{id}$ updates its {\it pref} variable at line \lref{change-pref},
so the value $v$ is the value
of a pair that appears twice in $s_{id}$.
By the induction hypothesis, $v\in V$.

{\sc Case 2b}: $s_{id}$ does not satisfy the condition on line \lref{cond}.
We first argue that at least one pair appears twice in $s_{id}$.
Recall that there are at most $\ell-1$ undecided processes in $C_0$.
Since $A$ contains at most $m$ distinct pairs ($|X|\leq m$) in $C_0$
and at most $\ell-1$ processes update $A$ after $C_0$,
Lemma~\ref{invOne} implies that, 
when the scan $s_{id}$ is performed, $A$ contains at most
$m+\ell-1=n+2m-k-1$ distinct pairs. Since there are $r=n+2m-k$ 
locations in $A$, at least one pair appears twice in $s_{id}$. 

Since $q_0$ has previously output a value, $s_{id}$ contains no $\bot$ elements. 
Thus, the reason that $s_{id}$ does not satisfy the condition on line \lref{cond}
must be that for some $j$ different from $p_{id}$'s position $i$,
$s_{id}[j]=(\textit{pref},id)$.  Just before taking the scan $s_{id}$, $p_{id}$ stores
$(v,id)$ in location $i$. This update occurs after $C_0$, since $st$ is 
not the first update by $p_{id}$ after $C_0$.  
In the configuration after this update of location $i$,
both $s_{id}[j]$ and $s_{id}[i]$ contain $(v,id)$.
So, by the induction hypothesis, $v\in V$.
\end{proof}
Lemma~\ref{lem:safety} implies that all processes after the $(n-\ell)$th in the ordering can only decide one of the (at most) $m$ values in $V$ and, thus, {\bf $k$-agreement} is ensured. 

To prove {\bf $m$-obstruction-freedom}, consider an execution where
the set $P$ of processes that take infinitely many steps has size at most $m$.
To derive a  contradiction, assume some process in $P$ never decides.
In each  loop iteration, a process either
keeps its preferred value and increments $i$ (its location to
update) modulo $r$ or sets its preferred value without modifying  $i$. 
We partition $P$ into two subsets: the set $NS$
of ``non-stabilizing'' processes that modify  $i$ 
infinitely often and
the set $S$ of  ``stabilizing'' processes that eventually get stuck 
updating the same location $i$ forever.
\journalversion{Every process in $NS$ updates each component of $A$
infinitely often, and there is a time after which each process in
$S$, whenever it executes the loop, changes its preferred
value and stores it in the same location.
}

\begin{lemma}\label{term-dontchange}
There is at least one process in $NS$.
\end{lemma}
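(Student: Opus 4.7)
The plan is to argue by contradiction. Suppose $NS = \emptyset$, so $P = S$ and every process in $P$ eventually stops advancing its index $i$, becoming fixed at some location $i_p$. I aim to show that this forces every such process to satisfy the halt test on line~8 eventually, contradicting the standing assumption that some process in $P$ takes infinitely many steps without terminating.

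First, pick a time $T$ after which (i) only processes in $P$ take steps, (ii) each $p \in P$ writes exclusively to its stabilized location $i_p$, and (iii) every scan by a process in $P$ passes the test on line~\lref{cond} (otherwise the else-branch on line~\lref{change-ind} fires and $p$ advances $i$, contradicting $p \in S$). Combining the first conjunct of condition~\lref{cond} with Lemma~\ref{invOne} yields the structural fact that for each $p \in P$, no location other than $i_p$ carries a pair with identifier $id_p$ at any configuration past $T$. Thus $I = \{i_p : p \in P\}$ is exactly the set of positions of $A$ holding identifiers from $P$, with one distinct such identifier per position, while all positions outside $I$ are frozen after $T$ and carry pairs with identifiers outside $P$.

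Next, analyze the duplicated pair required by the second conjunct of condition~\lref{cond}. A duplicate cannot involve a $P$-identifier (these occupy only one position each), nor can it straddle $I$ and a frozen slot (the frozen slot would need a $P$-identifier). Therefore the leftmost repeated pair in every scan past $T$ lies at two frozen positions. Since the frozen part of $A$ never changes after $T$, this leftmost duplicate is a single fixed pair with some value $v^*$, and line~\lref{change-pref} assigns $v^*$ to $\mathit{pref}_p$ at every iteration. Within one further iteration of each $p \in P$, every such process has $\mathit{pref} = v^*$ and writes $(v^*, id_p)$ to $i_p$, so $A$ stabilizes to a fixed configuration.

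Finally, derive the contradiction from this steady state by showing that the halt test on line~8 must then be satisfied by the next scan of some $p \in P$. The calculation counts the distinct entries in the scan: $|P|$ pairs of the form $(v^*, id_p)$ contributed by the positions in $I$, plus whatever distinct pairs occur in the frozen positions. Using $r = n + 2m - k$ together with $|P| \le m$ and the constraint that at least one pair must repeat among the frozen slots, one aims to bound the total number of distinct entries by $m$, triggering the halt and yielding the desired contradiction. The principal obstacle is exactly this concluding combinatorial bound—ruling out configurations in which the frozen portion produces too many distinct pairs for the halt threshold to be reached; this step is where the precise sizing $r = n + 2m - k$ of the snapshot object has to be exploited carefully.
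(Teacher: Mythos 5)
Your setup and your structural analysis are sound and track the paper's own argument closely: the stabilization time, the split of $A$ into the at most $m$ ``live'' positions $I$ and the frozen remainder, and the observation (via Lemma~\ref{invOne} and the first conjunct of line~\lref{cond}) that no process in $P$ may leave a copy of its pair at a frozen position, whence every duplicate witnessing the second conjunct must lie entirely in the frozen region. The problem is the endgame, and it is not merely an unfinished detail: the combinatorial bound you are hoping for is false. In the steady state you describe, the $|P|\leq m$ live positions contribute the pairs $(v^*,id_p)$, $p\in P$, and the frozen duplicate is owned by a process \emph{outside} $P$, so it contributes at least one further distinct pair; already with $|P|=m$ every scan contains at least $m+1$ distinct pairs and the halt test on line~\lref{output} never fires. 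Worse, the frozen region has at least $r-m=n+m-k$ positions and nothing prevents it from holding up to $n+m-k-1$ distinct pairs (one duplicated, the rest pairwise distinct), so a scan may contain as many as $r-1=n+2m-k-1$ distinct pairs --- far above the threshold $m$ for all admissible parameters. The sizing $r=n+2m-k$ is used elsewhere in the termination proof (after Lemma~\ref{term-dontchange}, to force a decision once $NS\neq\emptyset$ has flushed $A$), not here.

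The paper closes this lemma by a different mechanism: it examines the \emph{owner} of the duplicated pair rather than counting distinct pairs. If a pair $(v,k)$ is duplicated with one copy at a live position and one at a frozen position, then $p_k\in P$, and after $p_k$'s own update its next scan finds its current pair at a frozen position different from its settled index; this violates the first conjunct of line~\lref{cond}, forces $p_k$ onto line~\lref{change-ind}, and contradicts $P=S$ (the paper's Case~1). When the duplicate lies wholly in the frozen region (your situation), the paper argues that the scanning process's preference must subsequently change, which requires a \emph{new} duplicate at a smaller index; such a duplicate can lie neither wholly in the frozen region (it would already have been present) nor wholly among the live positions (distinct owners), so it straddles the two and Case~1 applies. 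Your derivation lands precisely in the configuration that this Case~2/Case~1 interplay is meant to eliminate, but substitutes a counting argument that cannot succeed. To repair the proof you need the ownership argument --- i.e., an explanation of why the system cannot persist forever in the all-frozen-duplicate, all-preferences-equal-to-$v^*$ state --- not a sharper count.
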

\begin{proof}
To derive a contradiction, assume the claim is false (i.e., $P=S$).
Let $\mu$ be a time after which only processes in $P$ take steps and
no process changes its local variable $i$.
Then there is a set $M$ of at most $m$ locations whose contents are updated after $\mu$.  Let $NM$ be the set of at least $n+m-k\ge 2$ locations that are not
updated after $\mu$.
Let $\mu'$ be any time when each process in $P$ has performed at least one
update after $\mu$. Thus, at $\mu'$, every location in $M$ contains a pair stored by a process in $P$.

Let $p$ be a process in $P$ that performs a scan that returns a vector $s_p$ after $\mu'$.  
By the hypothesis, $p$ changes its preferred value in every iteration after $\mu'$, so $s_p$ satisfies the condition on line \lref{cond}.
Process $p$ then changes {\it pref} to a value $v$ in a pair $(v,k)$ 
that appears twice in $s_p$.
Since each component in $M$ is updated by different processes, no two can
contain the same pair after $\mu'$.
We consider two cases.

{\sc Case 1:}  in $s_p$, $(v,k)$ appears in one component of $M$ and one of $NM$. 
As  $(v,k)$ is read from a component in $M$ after $\mu'$, $p_k\in P$.
Consider the time (after $\mu$) at which $p_k$ stores $(v,k)$ in a
component in $M$.
Since no register in $NM$ ever changes its value after $\mu$, 
in $p_k$'s subsequent scan, 
$(v,k)$ is in some register of $NM$ and $p_k$ will not change its preferred value, contradicting the fact that $P=S$.

{\sc Case 2:} in $s_p$, $(v,k)$ appears in two components of $NM$.
By the definitions of $NM$ and $\mu'$,  $(v,k)$ is found twice in $NM$ at all times after $\mu$. 
As $p$ changes its preferred value after its next update, it must have found another pair that appears twice and was not in $A$ previously. 
Then this new pair cannot be in two locations in $NM$. The pair cannot  be in two locations
in $M$ either because all the locations of $M$ are updated by different processes. 
Thus, this new pair is  in one location of $M$ and one location of $NM$. But, as we have seen in
Case 1, this leads to
a contradiction.
\end{proof}
%
Thus, some process updates each component of $A$ infinitely often, yielding the following corollary.
\begin{corollary}
\label{term-cor}
There is a time  after which $A$ contains only pairs stored by processes in $P$.
\end{corollary}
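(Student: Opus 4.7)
The plan is to combine Lemma \ref{term-dontchange} with the definition of $P$. First, since every process outside $P$ takes only finitely many steps, there is a time $\mu_0$ after which no process outside $P$ takes a step. Second, by Lemma \ref{term-dontchange}, pick some $p \in NS$. By definition of $NS$, process $p$ modifies its local variable $i$ infinitely often on line~\lref{change-ind}, and since $i$ is updated by $i \leftarrow (i+1) \bmod r$, the value of $i$ cycles through all $r$ components of $A$. Therefore $p$ performs an update on line~\lref{write-pref} to each component of $A$ infinitely often.

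In particular, after time $\mu_0$, there is a later time $\mu_1$ by which $p$ has performed at least one update to every one of the $r$ components of $A$. At time $\mu_1$, each component of $A$ has most recently been written by a process that performed a step after $\mu_0$, and hence by a process in $P$. Any subsequent writes to $A$ are also performed by processes in $P$, since no process outside $P$ takes a step after $\mu_0$. Consequently, at every configuration after $\mu_1$, every component of $A$ contains a pair stored by some process in $P$, which is precisely the claim of the corollary.

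There is essentially no obstacle here beyond unpacking the definition of $NS$: the only thing to notice is that changing $i$ infinitely often, together with the fact that $i$ advances cyclically modulo $r$, implies $p$ touches every component of $A$ infinitely often. Once this is stated, the corollary is immediate from Lemma \ref{term-dontchange} and the definition of $P$.
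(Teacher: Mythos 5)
Your proof is correct and follows the same route as the paper, which justifies the corollary with the one-line observation that some process in $NS$ updates every component of $A$ infinitely often; you have simply spelled out the details (the cycling of $i$ modulo $r$ and the choice of the cutoff time after which only processes in $P$ take steps).
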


By Corollary~\ref{term-cor}, there is a time $\nu$ after which
(1)~$A$ contains only pairs stored by processes in $P$. By
Lemma~\ref{invOne}, (2)~all pairs in $A$ with the same id have the same
value. By the assumption, (3)~$|P|\leq m$. (1), (2) and (3) imply that
after $\nu$, each time a  process  $p\in P$ performs a scan 
it finds at most  $m$ different pairs in the snapshot and 
decides.  This contradiction establishes the {\bf $m$-obstruction-freedom} property.

\begin{theorem}
\label{one-shot-alg}
For $1\leq m\leq k<n$, there is an $m$-obstruction-free algorithm that solves $k$-set agreement among $n$ processes
using $\min(n+2m-k,n)$ registers.
\end{theorem}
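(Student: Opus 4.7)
The plan is to assemble the theorem statement from the technical work already done in this section. First, I would observe that validity is immediate from the code of Figure~\ref{two-copies}: every value written to $A$ or returned as output originates from some process's input via the local variable \textit{pref}. Next, I would cite Lemmas~\ref{invOne} and~\ref{lem:safety} together with the ordering argument following Lemma~\ref{invOne} to conclude the $k$-agreement property: at most $n-\ell = k-m$ processes decide ``early'', and all later deciders output a value from the set $V$ of at most $m$ values pinned down when $q_0$'s final scan occurs, for a total of at most $k$ distinct outputs. Then $m$-obstruction-freedom follows from Lemma~\ref{term-dontchange} and Corollary~\ref{term-cor}, as explained in the paragraph just after the corollary: after time $\nu$, every scan by a process $p\in P$ sees at most $|P|\leq m$ distinct pairs (using that pairs share an id implies they share a value), so $p$ exits the loop on the next iteration.

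This shows the algorithm of Figure~\ref{two-copies} works assuming access to a snapshot object with $r = n+2m-k$ components. To get the register count, I would appeal to the standard wait-free implementation of multi-writer snapshot objects from registers~\cite{AAD93}: an $r$-component snapshot object can be built from $r$ multi-writer registers. This gives an $m$-obstruction-free $k$-set agreement algorithm using $n+2m-k$ registers.

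Finally, to justify the $\min$ in the statement, I would handle the range $n+2m-k > n$ (equivalently $2m > k$) separately by falling back to the folklore $n$-register upper bound: $n$ large single-writer registers suffice to implement any number of multi-writer registers~\cite{VA86}, and hence in particular a snapshot object of any size, so the same algorithm can be executed using only $n$ registers. Taking the better of the two constructions yields $\min(n+2m-k,n)$ registers in all cases.

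I do not expect any genuinely hard step; the work is essentially bookkeeping, collecting the previously-proven lemmas into a single statement and invoking the standard snapshot implementation. The only mild subtlety is being explicit that the snapshot-from-registers construction preserves the $m$-obstruction-freedom progress guarantee, which is true because the snapshot implementation of~\cite{AAD93} is wait-free and therefore makes progress regardless of the scheduler.
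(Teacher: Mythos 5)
Your proof is correct and follows essentially the same route as the paper: establish the three properties from the preceding lemmas and then implement the $r$-component snapshot from $\min(n+2m-k,n)$ registers, splitting on whether $n+2m-k\leq n$. The only quibble is attribution: the ``$r$-component multi-writer snapshot from $r$ registers'' construction is the one the paper credits to \cite{EFR07}, while \cite{AAD93} supplies the $n$ single-writer-register implementation used in the other case.
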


\begin{proof}
We established above that the algorithm in Figure~\ref{two-copies} solves the problem using
a snapshot object of $n+2m-k$ components.
If $n+2m-k\leq n$, the snapshot object can be implemented from $n+2m-k$ registers~\cite{EFR07}. 
Otherwise, the snapshot  can be implemented from $n$ single-writer registers \cite{AAD93,VA86}.
\journalversion{If the set of process ids is known a priori, then this completes the
proof.
Otherwise, the $n$ single-writer registers can be implemented from $n$ multi-writer registers
in a non-blocking manner \cite{DFGR15}.}
\end{proof}

\medskip

\subsection{Repeated $k$-set agreement}
The one-shot $k$-set agreement algorithm can be transformed into an algorithm for
repeated set agreement with the same space complexity to prove the following theorem. 
Since it is quite similar to the one-shot algorithm, we
describe it briefly.  

\setcounter{linenum}{0}
\begin{figure*}
\begin{code}
\firstline
Shared variable:\nl
\n $A$: snapshot object with $r=n+2m-k$ components, each initially $\bot$\bl\nl
\p 
Persistent local variables:\nl
 \n   $i\leftarrow 0$\nl
 $t \leftarrow 0$\nl
 $\textit{history} \leftarrow $ empty sequence\bl\nl    
\p {\sc Propose}($v$)\nl
\n $t \leftarrow t+1$\nl 
     if $|\textit{history}|\geq t$ then   \nl

   \n output the $t$-th value in \textit{history} and  halt\label{output-old-rep}\nl
 \p   {\it pref} $\leftarrow v$\nl 

    loop \label{beginloop-rep}\nl
\n      update $i$th component of $A$ with $(\mbox{\it pref},id,t,\textit{history})$\label{write-pref-rep}\nl
        $s\leftarrow $ scan of $A$ \label{snap-rep}\nl
        if $\exists j$ such that $s[j]=(w,id',t',his)$ with $t' > t$ then \label{halt-cond1-rep} \nl
        \n  $\textit{history} \leftarrow his$, output the
               $t$-th value in $his$ and halt \label{halt-rep}\nl 
\p      if $|\{s[j] : 0\leq j < r\}| \leq m$  and 
$\forall j$, $s[j]$ is neither
        $\bot$ nor of the form $(w,q,t',his)$ with $t'<t$
then \label{halt-cond2-rep}\nl
        \n  let $j_1 \leftarrow \min\{j_1 : \exists j_2>j_1 \mbox{ such
                that }s[j_1]=s[j_2]\}$\nl
            let $w$ be value in $s[j_1]$\nl
            $\textit{history} \leftarrow \textit{history}\cdot w$\nl 
            output $w$ and halt\label{output-rep}\nl
\p      if $\forall j\neq i, s[j]\notin\{\bot, (\mbox{\it pref},id,t,\textit{history})\}$ and $\exists j_1\neq j_2$ such that $s[j_1]$ and $s[j_2]$ contain \ul
\n\n               identical $t$-tuples then \label{cond-rep}\nl 
\p	        $j_1\leftarrow \min\{j_1 : \exists j_2>j_1 \mbox{ such that $s[j_1]$ and $s[j_2]$ contain identical $t$-tuples}\}$\nl
            {\it pref} $\leftarrow$ value in $s[j_1]$ \label{change-pref-rep}\nl
\p      else $i\leftarrow (i+1) \mymod r \label{change-ind-rep}$\nl
\p  end loop\nl
\p end {\sc Propose}
\end{code}
\caption{Algorithm for $m$-obstruction-free repeated $k$-set agreement. \label{mconc-rep-kset}}
\end{figure*}

The pseudocode for our repeated $k$-set agreement algorithm is given in
Figure~\ref{mconc-rep-kset}.  
It essentially follows the pseudocode of the  one-shot algorithm
(Figure~\ref{two-copies}), with additional ``shortcuts'' which a
process may use to adopt a value output previously by another process that has
already reached a higher instance of repeated set agreement. 
Also, a value stored by a process in a lower instance is treated as
$\bot$.
Thus, a process decides in instance $t$ only if all tuples found in $A$ are stored by
processes in instance $t$ and there are at most $m$ distinct tuples, or
if another process has reached an instance higher than $t$.   

Each process $p$ maintains a local variable  $\textit{history}$ that
stores a sequence of output values that have been produced in the first
instances of repeated $k$-set agreement. 
In the current instance $t$, $p$ essentially follows the one-shot
algorithm (Figure~\ref{two-copies}), except that it appends the
current instance number $t$ and $\textit{history}$ to each value it
stores in the shared memory. 
Thus, each element of the vector returned by a scan of $A$
contains either $\bot$ or a tuple of the form
$(id,v,t',his)$.  
If $t'>t$, then $p_{id}$ has already completed instance
$t$ and  $his$ contains the corresponding output value.
If this is the case, $p$ adopts all the values output by $p_{id}$ for
instances from $t$ to $t'-1$.      
If $t'<t$, indicating that $p_{id}$ has not yet reached instance $t$, 
then the position of $A$ is treated as if it were $\bot$ in the
one-shot algorithm.

To prove {\bf $k$-agreement}, we focus on processes that produce
their output for instance $t$ without adopting a value from the history
that another process stored in $A$.  We call these \emph{$t$-deciding processes}.
Since each other processes that completes its $t$th {\sc Propose} adopts
one of the value of a $t$-deciding process, it suffices to prove that
$t$-deciding processes output at most $k$ different values.
As in the proof for the one-shot case, we show that the last $\ell=n-k+m$ 
$t$-deciding processes output at most $m$ values.
There is one complication in the argument:
after the $(n-\ell+1)$th $t$-deciding process performs its last scan during
instance $t$, processes may store 
a $t'$-tuple with $t'<t$.
We show that each process can do this only in a single location, 
which ensures 
the agreement property for instance $t$ is not disrupted.
    
To show {\bf $m$-obstruction-freedom}, 
consider an execution where the set $P$ of processes that take infinitely many
steps has size at most $m$.
To derive a contradiction, assume some process in $P$ does not complete a {\sc Propose}.  Let $t$ be the smallest number for which some process does not complete its $t$th {\sc Propose} and let $P'$ be the set of processes that do not complete
their $t$th {\sc Propose}.  Since the processes in $P'$ never witness 
the presence of a process in a higher instance of set-agreement,
the argument for the one-shot case can be applied to this set $P'$
to obtain the desired contradiction.

A detailed proof of the algorithm can be found in Appendix~\ref{app:non-anon-algorithm}.

\begin{theorem}
\label{thm:repeated-alg}
For $1\leq m\leq k <n$, there is an $m$-obstruction-free algorithm that solve repeated  $k$-set agreement 
among $n$ processes using $\min(n+2m-k,n)$ registers.
\end{theorem}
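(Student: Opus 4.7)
The plan is to follow the correctness proof of the one-shot algorithm (Theorem~\ref{one-shot-alg}) instance by instance, adapting each step to accommodate the extra instance number $t$ and the $\textit{history}$ field. Validity is an easy induction on $t$: every value written to $A$ is either the current proposal or comes from some $\textit{history}$ entry, whose contents are by induction legitimate outputs of earlier instances. Before attacking $k$-agreement I would reprove an analogue of Lemma~\ref{invOne}: for each identifier $id$ and each instance $t$, all tuples of the form $(\cdot,id,t,\cdot)$ that ever occupy a cell of $A$ share the same value and history. The argument mirrors the one-shot case: whenever $p_{id}$ changes its $\textit{pref}$ at line~\lref{change-pref-rep}, the triggering scan can contain $(\textit{pref},id,t,\textit{history})$ only at $p_{id}$'s current index $i$ (by the first conjunct of line~\lref{cond-rep}), so in the next iteration $p_{id}$ stays at $i$ and overwrites its old tuple before any other location could take on an inconsistent value.

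For $k$-agreement on instance $t$, call a process \emph{$t$-deciding} if it outputs through line~\lref{output-rep}; every other process that completes its $t$th {\sc Propose} does so via line~\lref{halt-rep}, which copies a value some $t$-deciding process has already published in its $\textit{history}$. Hence it suffices to bound the number of distinct outputs produced by $t$-deciding processes. Let $\ell = n-k+m$, order the $t$-deciding processes by the time of their final scan at instance $t$, and let $q_0$ be the $(n-\ell+1)$-st such process; its final scan returns at most $m$ distinct $t$-tuples, yielding a value set $V$ with $|V|\le m$. The core step, in direct analogy with Lemma~\ref{lem:safety}, is to show by induction on the steps after $q_0$'s final scan that any $t$-tuple appearing in two distinct cells of $A$ carries a value in $V$. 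Case~2b of that proof, which relies on a counting argument, is the one that needs care: at $q_0$'s final scan $A$ contains at most $m$ distinct $t$-tuples, at most $\ell-1$ further processes can introduce a new $t$-tuple, and tuples with instance number $t'<t$ are treated as $\bot$ by line~\lref{halt-cond2-rep} and cannot match a live $t$-tuple at line~\lref{cond-rep}, so $A$ holds at most $m+\ell-1 = r-1$ distinct $t$-tuples and a duplicate must be present in any scan that would otherwise prevent a $t$-decision.

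For $m$-obstruction-freedom, let $P$ be the set of processes taking infinitely many steps ($|P|\le m$), assume some process in $P$ never finishes a {\sc Propose}, and take the smallest instance $t$ witnessing this, with $P'\subseteq P$ the set of processes stuck at that instance. Because no member of $P'$ ever sees a tuple with instance number greater than $t$ (otherwise it would halt at line~\lref{halt-rep}), tuples with $t'<t$ behave for $P'$ exactly like $\bot$ in the one-shot algorithm, so Lemma~\ref{term-dontchange} and Corollary~\ref{term-cor} apply unchanged to $P'$ and instance $t$ to yield a contradiction. The register count $\min(n+2m-k,n)$ follows as in Theorem~\ref{one-shot-alg} by implementing the snapshot from $n+2m-k$ registers via~\cite{EFR07} or from $n$ single-writer registers via~\cite{AAD93,VA86}. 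The hardest part will be the counting in the $k$-agreement step: confirming that stale $t'<t$ tuples written by processes that have already moved on to later instances cannot inflate the distinct-$t$-tuple count beyond $r-1$ or fabricate spurious duplicates. The saving observation is that once a process passes instance $t$ it never again writes a tuple with instance number $t$, so the stock of live $t$-tuples is bounded by the $\le \ell-1$ processes still at instance $t$ together with the $\le m$ already witnessed at $q_0$'s final scan.
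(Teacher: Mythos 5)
Your overall plan matches the paper's: validity by induction, an instance-indexed analogue of Lemma~\ref{invOne}, the $t$-deciding/$q_0$/$V$ framework for $k$-agreement, and reduction of $m$-obstruction-freedom to the one-shot argument on the set $P'$ stuck at the minimal unfinished instance. However, there is a genuine gap exactly at the point you yourself flag as ``the hardest part,'' and your proposed resolution does not close it. For the pigeonhole in Case~2b you need the scan $s_{id}$ to contain two \emph{cells} holding identical $t$-tuples. Your count bounds the number of \emph{distinct} $t$-tuples by $m+\ell-1=r-1$, but that only forces a duplicate if all $r$ cells hold $t$-tuples. After $q_0$'s final scan, processes still working on instances $t'<t$ can overwrite cells with stale $t'$-tuples, and a single such process cycles its index through many locations; a priori these stale tuples could occupy so many cells that the remaining cells hold fewer $t$-tuples than there are distinct $t$-tuple values, and no duplicate exists. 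Your ``saving observation'' (a process past instance $t$ never writes a $t$-tuple again) bounds the wrong quantity: it controls how many distinct $t$-tuples can appear, not how many cells are clogged by non-$t$-tuples.

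The paper closes this hole with a dedicated lemma you are missing: call an update after $C_0$ \emph{bad} if it stores a $t'$-tuple with $t'<t$ or a $t$-tuple not in $X$; then each process performs bad updates to at most one location after $C_0$. (The proof considers the first process to do bad updates to two locations, notes it must have executed line~\lref{change-ind-rep} between them, and derives a contradiction from the scan taken in that iteration.) Since at most $\ell-1$ processes can perform bad updates at all, at least $r-(\ell-1)=m+1$ cells of $s_{id}$ contain $t$-tuples in $X$, and $|X|\le m$ then forces the needed duplicate. You should add this lemma (or an equivalent bound on the number of cells holding stale or out-of-$X$ tuples); without it the $k$-agreement induction does not go through. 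A second, more minor omission: in the termination argument the processes in $P\setminus P'$ that have advanced to later instances may still update $A$ infinitely often, so Lemma~\ref{term-dontchange} does not apply literally ``unchanged''; one must first observe that such updates are confined to the settled locations $M$ (otherwise $P'$ would witness a higher instance and halt), as the paper does in its Lemma on $NS\neq\emptyset$ for the repeated case.
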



\section{Lower Bound for Anonymous
One-Shot Agreement}
\label{m-conc-lower-sec}
\indent

We now turn to anonymous algorithms, where processes are not equipped with identifiers and are programmed identically.
We also assume that the domain of possible input values is \nat.
In this section, we show that any $n$-process anonymous algorithm for $m$-obstruction-free (one-shot) $k$-set agreement requires $\Omega(\sqrt{\frac{nm}{k}})$ registers.  
Note that this bound on space complexity reflects all three parameters:  increasing $n$ or $m$ makes the problem harder
and increasing $k$ makes the problem easier.
It also generalizes the anonymous result of Fich, Herlihy and Shavit \cite{FHS98} (which is the special case when $m=k=1$) by showing  the
dependence on two additional parameters $m$ and $k$.
The assumption of anonymity allows us to add {\it clones} to an execution.  A clone of a process $p$ is another process $p'$ that has the same input as $p$.  Whenever $p$ takes a step, $p'$ takes an identical step immediately afterwards.



\journalversion{For the journal version, we should see just how big the domain of possible values has to be to make the lower bound work.  It should work for exponential size domains.
See paper:  The minimum number of disjoint pairs in set systems and related problems by Das, Gan and Sudakov (which cites Erdos-Ko-Rado theorem of 1961 that might be useful).
In fact, the lower bound might even hold if the input domain has size n.
}

Let $A$ be an anonymous algorithm that solves $m$-obstruction-free $k$-set agreement among $n$ processes using finitely many registers.
For each set $V$  of $m$ distinct input values, fix an execution $\alpha(V)$ such that at most $m$ processes take steps during $\alpha$ and output
all values in~$V$.  (Such an execution exists, by Lemma \ref{lem:m-val}.)
Let $\bR(V)$  be the sequence of distinct registers written during
$\alpha(V)$ in the order they are first written in $\alpha(V)$.
For any sequence $\bR$ of distinct registers, define 
$\gV(\bR) = \{V \subset \nat : |V|=m \mbox{ and }\bR \mbox{ is a prefix of } \bR(V)\}$.

\begin{lemma}
\label{anonymous-gluing}
Let $r>0$ and
suppose $n\geq \ceil{\frac{k+1}{m}}(m+\frac{r^2-r}{2})$.  Then,
for $i=0,\ldots,r+1$, there is a sequence $\bR_i$ of length $i$ such that 
$\gV(\bR_i)$ is an infinite set.
\end{lemma}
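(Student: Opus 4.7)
The plan is to proceed by induction on $i$ for $i = 0, 1, \ldots, r+1$. The base case is immediate: take $\bR_0$ to be the empty sequence, so that $\gV(\bR_0)$ equals the collection of all $m$-element subsets of $\nat$, which is infinite.

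For the inductive step, assume $\bR_i$ of length $i \leq r$ has been constructed with $\gV(\bR_i)$ infinite. Since the algorithm uses only finitely many registers, I split into two cases. \emph{Case A}: infinitely many $V \in \gV(\bR_i)$ satisfy $|\bR(V)| > i$. Each such $V$ contributes an $(i{+}1)$-st register drawn from a finite set, so by pigeonhole infinitely many agree on the same register $R$, and then $\bR_{i+1} := \bR_i \cdot R$ has $\gV(\bR_{i+1})$ infinite. \emph{Case B}: infinitely many $V \in \gV(\bR_i)$ satisfy $\bR(V) = \bR_i$ exactly, so $\alpha(V)$ writes only to the $i$ registers listed in $\bR_i$. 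The remainder of the proof rules out Case B by deriving a contradiction with $k$-agreement, thereby forcing Case A and producing the desired $\bR_{i+1}$.

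To derive the contradiction in Case B, I would select $c = \lceil (k+1)/m \rceil$ sets $V_1, \ldots, V_c$ from this infinite sub-family so that $|V_1 \cup \cdots \cup V_c| \geq k+1$: either by extracting $c$ pairwise disjoint $V_j$'s when that is possible, or otherwise by first invoking the infinite $\Delta$-system (sunflower) lemma to isolate the common core and then selecting enough sets with disjoint petals to cover $k+1$ distinct values. Each $\alpha(V_j)$ uses at most $m$ processes, writes only to the $i$ registers of $\bR_i$, and outputs every value in $V_j$. I then build a single execution $\beta$ of $A$ in which, for each $j$, a dedicated group $Q_j$ of $m$ fresh processes, with inputs drawn from $V_j$, faithfully reproduces the steps and outputs of $\alpha(V_j)$; the outputs of $\beta$ then exhibit at least $k+1$ distinct values, contradicting $k$-agreement.

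The main obstacle is the construction of $\beta$: all $c$ groups must concurrently share the same $i$ registers, yet each $Q_j$ must see exactly the sequence of register contents it would see in the isolated solo execution $\alpha(V_j)$. Exploiting the algorithm's anonymity, I would augment each group $Q_j$ with a budget of up to $\binom{r}{2}$ additional \emph{clone} processes --- duplicates started in identical initial states as processes of $Q_j$ --- and schedule their writes to mask inter-group interference, so that every read performed by $Q_j$ returns the value it would see in $\alpha(V_j)$. In the spirit of the Fich--Herlihy--Shavit cloning technique, a careful interleaving and accounting will show that $c(m + \binom{r}{2}) \leq n$ processes suffice, matching the lemma's hypothesis on $n$ and completing the contradiction.
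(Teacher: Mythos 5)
Your overall strategy is the paper's: induction on $i$, a pigeonhole over the finitely many registers to extract $\bR_{i+1}$ once Case B is excluded, and an exclusion of Case B by gluing together $c=\ceil{\frac{k+1}{m}}$ solo-ish executions $\alpha(V_1),\ldots,\alpha(V_c)$ run by disjoint groups of $m$ anonymous processes, using clones whose block writes erase each group's traces from the shared registers so that every group reads what it would read running alone. Your clone budget of $\binom{r}{2}$ per group matches the paper exactly: the paper builds the glued execution by an inner induction on $j=0,\ldots,i-1$, pausing, for each group and each already-covered register $R_1,\ldots,R_{j-1}$, one clone just before the last write to that register, for a running total of $\frac{cj(j-1)}{2}$ extra processes, hence at most $c\bigl(m+\frac{r^2-r}{2}\bigr)\leq n$ overall. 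That inner induction (with its four invariants, in particular that $\beta_j|P_\ell$ is a genuine prefix of $\alpha(V_\ell)$ and that a fresh block write precedes each group's resumption) is the technical heart of the proof; in your write-up it is only gestured at ("a careful interleaving and accounting will show"), so you should be aware that this is where essentially all of the remaining work lives.

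The one place you genuinely deviate is the sunflower branch, and as stated it does not close. If the infinitely many sets $V$ with $\bR(V)=\bR_i$ form a sunflower with a nonempty core $Y$, then $c$ petal-disjoint members have union of size $cm-(c-1)|Y|$, which can be at most $k$ (e.g.\ $m=2$, $k=3$, $c=2$, $|Y|=1$ gives only $3$ distinct outputs), so no violation of $k$-agreement results; and taking more than $c$ groups to accumulate $k+1$ distinct values breaks the process count $c\bigl(m+\frac{r^2-r}{2}\bigr)\leq n$. To be fair, the paper itself is terse at the corresponding step: it negates "there exist $c$ \emph{disjoint} sets whose executions stay inside $\bR_{i-1}$" and concludes directly that infinitely many $V$ write outside $\bR_{i-1}$, which likewise presupposes that an infinite family of $m$-subsets of $\nat$ with no $c$ pairwise disjoint members cannot occur among the "contained" sets. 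So you correctly identified a subtlety, but your proposed repair needs a different argument (one that either produces $c$ pairwise disjoint contained sets from an infinite contained family, or bounds the contained family in some other way) rather than the sunflower lemma as you invoke it.
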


\begin{proof}
We prove the claim by induction on $i$.  

Base case ($i=0$):  $\bR_0$ is the empty sequence and  $\gV(\bR_0) = \{V\subset \nat : |V|=m\}$ is infinite.

Induction step:  Let $i\in \{1,2,\ldots,r+1\}$.  Assume there is a sequence $\bR_{i-1}=\langle R_1, R_2, \ldots, R_{i-1}\rangle$ such that $\gV(\bR_{i-1})$ is infinite.

The induction step is  technical, so we begin with an informal overview.
Let $c=\ceil{\frac{k+1}{m}}$.
We first show that there cannot be $c$ disjoint sets $V_1,\ldots,V_c$ in $\gV(\bR_{i-1})$ such that each $\alpha(V_\ell)$ writes only to registers in $\bR_{i-1}$; otherwise, we could glue together the $\alpha(V_\ell)$'s 
so that each $\alpha(V_\ell)$ is invisible to all the others, and 
the number of  output values in this glued-together execution would be
$|V_1\cup V_2\cup \cdots \cup V_c| = mc >k$.
Then, the rest of the argument is easy:  infinitely many sets in
$\gV(\bR_{i-1})$ must have register sequences of length at least $i$.
Since there are only finitely many  registers, infinitely many of
those sets  have the same register $R$ in position $i$ of their
sequence.  These form the infinite set $\gV(\bR_i)$, 
where $\bR_i = \bR_{i-1}\cdot R$.

To derive a contradiction, assume that  
(*) there exist $c$ disjoint sets $V_1,\ldots,V_c$ in $\gV(\bR_{i-1})$ such that
for all $\ell$, $\alpha(V_\ell)$ writes only to registers in $\bR_{i-1}$.
Let $P_1,\ldots,P_c$ be $c$ disjoint sets of $m$ processes each.
The following claim describes how we can glue together the $\alpha(V_\ell)$'s.
If $\beta$ is an execution  and $P$ is a set of processes, $\beta|P$ denotes the subsequence of $\beta$ consisting of steps taken by processes in $P$.

\medskip

{\sc Claim:}  For $j=0,1,\ldots,i-1$, there exists an execution $\beta_j$ with the following properties.
\begin{enumerate}
\item
\label{process-bound}
Exactly $\frac{cj(j-1)}{2}$ processes outside of $P_1\cup \ldots\cup P_c$ take steps during $\beta_j$.
\item
\label{write-exists}
For $\ell = 1,\ldots,c$, there is a write by some process in $P_\ell$ to each of $R_1,R_2,\ldots,R_j$ during $\beta_j$.
\item
\label{writes-contained}
No process writes to any register outside of $\{R_1,R_2,\ldots,R_j\}$ during $\beta_j$.
\item
\label{indistinguishable}
For $\ell= 1,\ldots,c$, $\beta_j | P_\ell$ is the prefix of $\alpha(V_\ell)$ up to but not including the first write to $R_{j+1}$ (or the entire execution $\alpha(V_\ell)$ if $j=i-1$).
\end{enumerate}

We prove the claim by inductively constructing the executions $\beta_j$.

{\sc Base case} ($j=0$):  We build $\beta_0$ by concatenating the maximal prefixes of $\alpha(V_1), \alpha(V_2), \ldots, \alpha(V_c)$ that do not contain any writes, performed by process sets $P_1, \ldots, P_c$, respectively.
No processes outside $P_1\cup \cdots\cup P_c$ take steps in $\beta_0$.
Property \ref{write-exists} is vacuously satisfied.  Properties \ref{writes-contained} and \ref{indistinguishable} follow 
immediately from the definition of $\beta_0$.

{\sc Inductive step}:  Let $j\in \{1,\ldots,i-1\}$.  Assume that there is a $\beta_{j-1}$ satisfying the four properties.
We describe how to construct $\beta_j$.

For each $\ell$, we insert $j-1$ clones of processes in $P_\ell$,
and we pause one clone just before the last write by a process in $P_\ell$ to each of $R_1,\ldots,R_{j-1}$.  Such a write exists, by property \ref{write-exists} of the induction hypothesis.
Moreover, there are enough processes to create these clones, since the number of processes that take steps in $\beta_{j-1}$ plus the $c(j-1)$ additional clones needed to construct $\beta_j$ total at most 
$mc+\frac{c(j-1)(j-2)}{2} + c(j-1) = mc + \frac{cj(j-1)}{2} \leq mc + \frac{c(i-1)(i-2)}{2} \leq mc+\frac{cr(r-1)}{2} = \ceil{\frac{k+1}{m}}(m+\frac{r^2-r}{2})$ and by the hypothesis of the lemma, there are this many processes in the system.

Let $\beta_{j-1}'$ be the execution that results from adding all of the clones to $\beta_{j-1}$.
We add some more steps to the end of $\beta_{j-1}'$ as follows.
For each $\ell = 1,\ldots, c$, we add a block write by the clones of processes in $P_\ell$
followed by steps of processes in $P_\ell$ continuing the steps of $\alpha(V_\ell)$ until
some process is poised to write to $R_{j+1}$ for the first time (or until the end of $\alpha(V_\ell)$ if $j=i-1$).  (This is legal, because the block write ensures that all
registers have the same state as they would have after $\beta_{j-1}|P_\ell$, which is a prefix of $\alpha(V_\ell)$, by induction hypothesis \ref{indistinguishable}.)  Thus, we ensure that $\beta_j$ satisfies property~\ref{indistinguishable}.

By property \ref{indistinguishable} of the inductive hypothesis, 
the first newly added step by a process in $P_\ell$ writes to $R_j$.  Combined with induction hypothesis \ref{write-exists}, this proves property \ref{write-exists}.
For $j<i-1$, property \ref{writes-contained} holds because we stop the processes in
$P_\ell$ just before they write to any register outside of $\{R_1, \ldots, R_j\}$.
For $j=i-1$, property \ref{writes-contained} follows from our assumption (*) that $\alpha(V_\ell)$ writes only to registers in $\bR_{i-1}$.

The processes outside $P_1\cup \cdots \cup P_c$ that take steps in $\beta_j$
are the $\frac{c(j-1)(j-2)}{2}$ processes that take steps in $\beta_{j-1}$
plus the $c(j-1)$ clones that we added when constructing $\beta_{j-1}'$.
So the total number of such processes is $\frac{cj(j-1)}{2}$, satisfying property \ref{process-bound}.
This completes the proof of the claim.

\medskip

In $\beta_{i-1}$ processes in $P_\ell$ output all $m$ values in $V_\ell$ (for all $\ell$).  Since $V_1,\ldots,V_c$ are disjoint sets, there are at least $cm = \ceil{\frac{k+1}{m}}\cdot m \geq k+1$ different output values in $\beta_{i-1}$.
This contradicts the $k$-agreement property.
Thus, assumption (*) is false, so
there are fewer than $c$ disjoint sets in $\gV(\bR_{i-1})$ such that $\alpha(V_\ell)$ writes only to registers in $\bR_{i-1}$.
Thus, there are infinitely many sets $V$ in $\gV(\bR_{i-1})$ such that $\alpha(V)$ writes outside of $\bR_{i-1}$.
Since there are only finitely many registers, there must be infinitely many of these sets $V$ such that the first register
outside of $\bR_{i-1}$ written during $\alpha(V)$ is the same for all $V$.  Call that register $R$.  Let $\bR_{i}$
be obtained by concatenating $R$ to the end of $\bR_{i-1}$.  Then, there are infinitely many sets $V$ such that $\bR_i$ is a prefix of
$\bR(V)$.
This completes the proof.
\end{proof}

\begin{theorem}
\label{m-conc-lower-thm}
Any anonymous algorithm that solves $m$-obstruction-free $k$-set agreement among $n$ processes using registers must use more than $\sqrt{m(\frac{n}{k}-2)}$ registers.
\end{theorem}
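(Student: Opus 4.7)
The plan is to combine Lemma \ref{anonymous-gluing} with a simple pigeonhole observation on the finite total number of registers, and then perform a routine algebraic manipulation.

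First, let $r$ denote the number of registers used by an anonymous algorithm $A$ solving the problem. The bound is vacuous when $n \leq 2k$, so I would assume $n > 2k$. I would then instantiate Lemma \ref{anonymous-gluing} with this value of $r$. If the hypothesis $n \geq \lceil \frac{k+1}{m} \rceil \bigl(m + \frac{r^2-r}{2}\bigr)$ were satisfied, the lemma would produce a sequence $\bR_{r+1}$ consisting of $r+1$ distinct registers, which is impossible since only $r$ registers exist in the entire system. So the hypothesis of the lemma must fail, giving
\[
n < \Bigl\lceil \tfrac{k+1}{m} \Bigr\rceil \cdot \Bigl(m + \tfrac{r^2-r}{2}\Bigr).
\]

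Next I would bound the ceiling. Since $1 \leq m \leq k$, a short case check shows $\lceil \frac{k+1}{m} \rceil \leq \frac{2k}{m}$ (for $k > m$ this follows from $\lceil \frac{k+1}{m} \rceil \leq \frac{k+1}{m} + 1 \leq \frac{2k}{m}$, and for $k = m$ both sides equal $2$). Substituting and dropping the $-r$ term yields
\[
n < \tfrac{2k}{m}\Bigl(m + \tfrac{r^2}{2}\Bigr) = 2k + \tfrac{k r^2}{m},
\]
which rearranges to $r^2 > \frac{m(n-2k)}{k} = m\bigl(\frac{n}{k} - 2\bigr)$, hence $r > \sqrt{m\bigl(\frac{n}{k} - 2\bigr)}$, as required.

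There is essentially no obstacle here: the heavy lifting — the gluing of the executions $\alpha(V_1),\ldots,\alpha(V_c)$ with clones and the pigeonhole extraction of an infinite family with a common prefix of written registers — has already been encapsulated in Lemma \ref{anonymous-gluing}. The only real work in this theorem is choosing the right instantiation ($r =$ number of registers, forcing $\bR_{r+1}$ to be impossible) and carrying out a short algebraic simplification of the ceiling expression to recover the stated closed-form bound.
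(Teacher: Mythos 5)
Your proposal is correct and follows essentially the same route as the paper: both invoke Lemma~\ref{anonymous-gluing} with $r$ equal to the number of registers, observe that a sequence of $r+1$ distinct registers is impossible, and then do algebra on the hypothesis $n\geq\ceil{\frac{k+1}{m}}(m+\frac{r^2-r}{2})$ (the paper runs the computation as a direct contradiction using the bound $\ceil{\frac{k+1}{m}}\leq\frac{k+m}{m}$, while you take the contrapositive and use $\ceil{\frac{k+1}{m}}\leq\frac{2k}{m}$, but these are equivalent manipulations yielding the same bound).
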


\begin{proof}
Assume an algorithm solves the problem using $r$ registers where $r\leq \sqrt{m(\frac{n}{k}-2)}$.
Then,
\begin{eqnarray*}
r & \leq  \sqrt{m\left(\frac{n}{k}-2\right)}\\
& \leq  \sqrt{m\left(\frac{2n}{k+m}-2\right)} \\
&(\mbox{ since }m\leq k \Rightarrow 
\frac{2n}{k+m}\geq\frac{n}{k} )\\
& = \sqrt{\frac{2m(n-k-m)}{k+m}}\\
\Rightarrow &  r^2-r \leq  \frac{2m(n-k-m)}{k+m}\\
\Rightarrow & \frac{k+m}{m}\cdot\frac{r^2-r}{2} \leq n-k-m\\
\Rightarrow & n  \geq \frac{k+m}{m}\left(m+\frac{r^2-r}{2}\right)\\
&\geq  \ceil{\frac{k+1}{m}}\left(m+\frac{r^2-r}{2}\right).
\end{eqnarray*}
 
So, by Lemma \ref{anonymous-gluing} there exists a sequence of $r+1$ registers used in some executions of $A$, 
which is impossible since there are only $r$ registers.
\end{proof}

\ignore{
In the special case where $m=k-1$, the arithmetic in the proof of Theorem \ref{m-conc-lower-thm} can be done more carefully to get a tighter bound, as described in the following proposition, which will be used in Section \ref{resilient-lower}.

\begin{proposition}
Any anonymous algorithm that solves $(k-1)$-obstruction-free $k$-set agreement among $n$ processes using registers must use more than $\sqrt{2(n-k+1)}$ registers.
\end{proposition}
\begin{proof}
Let $m=k-1$.
Assume an algorithm solves the problem using $r$ registers where $r\leq \sqrt{2(n-k+1)}$.
Then,
\begin{eqnarray*}
r -\frac{1}{2} & \leq & \sqrt{2(n-k+1)+\frac{1}{4}}\\
\left(r-\frac{1}{2}\right)^2 & \leq & 2(n-m) + \frac{1}{4}\\
r^2-r & \leq & 2(n-m) \\
n & \geq & m + \frac{r^2-r}{2}\\
n & \geq & \ceil{\frac{k+1}{m}}\left(m+\frac{r^2-r}{2}\right).
\end{eqnarray*}
So, by Lemma \ref{anonymous-gluing} there exists a sequence of $r+1$ registers used in some executions of $A$, 
which is impossible since there are only $r$ registers.
\end{proof}
}

\section{Anonymous Algorithm for
Repeated Set Agreement}
\label{m-conc-alg-sec}
\begin{theorem} 
\label{thm:anon-alg}
There is an algorithm that solves
$m$-obstruction-free repeated $k$-set agreement among $n$ processes (for $m\leq k$)
using $(m+1)(n-k)+m^2+1$ registers.
\end{theorem}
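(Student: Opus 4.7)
The plan is to exhibit an anonymous repeated $k$-set agreement algorithm using $(m+1)(n-k) + m^2 + 1$ registers and verify it via the three-property template (Validity, $k$-Agreement, $m$-Obstruction-Freedom) established in Section~\ref{linear-space-alg}, adapting the non-anonymous algorithm of Figure~\ref{mconc-rep-kset} so that process identifiers are no longer needed.

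The natural structure suggested by the count $(m+1)(n-k)+m^2+1$ is three-fold. First, $n-k$ ``position slots,'' each consisting of $m+1$ snapshot components; an anonymous process detects commitment of a value by observing $m+1$ matching copies within one slot, replacing the role played by the $(v, \textit{id})$ tag in Figure~\ref{two-copies}. Second, an auxiliary $m \times m$ area used to confine late-deciding processes to at most $m$ distinct values, analogous to the ``last $\ell = n-k+m$'' processes in the non-anonymous proof. Third, a single ``announcement'' register used to publish a completed history so that a process lagging in instance number can catch up, mirroring lines \lref{halt-cond1-rep}--\lref{halt-rep} of Figure~\ref{mconc-rep-kset}. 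Each process runs a loop analogous to Figure~\ref{mconc-rep-kset}: it writes a tuple $(\textit{pref}, t, \textit{history})$ into its current position, scans, and either (a)~adopts a higher-instance history from the announcement register, (b)~decides if at most $m$ distinct same-instance tuples appear and some slot has reached quorum, or (c)~updates \textit{pref} or advances its position index.

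Validity is immediate by inspection. For $k$-Agreement I would follow the strategy of the proof of Theorem~\ref{one-shot-alg}: identify a configuration after which only values from a distinguished set $V$ of size at most $m$ can reach quorum in the position slots, conclude that at most $n-k$ processes decide before that point and at most $m$ distinct values afterwards, bounding the total at $k$. For $m$-Obstruction-Freedom, when a set $P$ with $|P|\le m$ takes infinitely many steps, the $(m+1)$-fold replication forces some position to accumulate $m+1$ matching tuples from $P$, so every correct process eventually observes a quorum and halts. The extension from one-shot to repeated follows the pattern used to pass from Figure~\ref{two-copies} to Figure~\ref{mconc-rep-kset}: the extra instance and \textit{history} fields allow lagging processes to catch up, and the argument can be applied instance by instance as in Theorem~\ref{thm:repeated-alg}.

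The main obstacle is the anonymous analogue of Lemma~\ref{invOne}. Without identifiers, two matching tuples in the snapshot may originate from unrelated processes (or from cloned executions), so the ``duplicate copies witness a single committed writer'' logic fails. The $(m+1)$-fold slot replication together with the $m^2$-register auxiliary area must be used to prove a more delicate invariant: once $m+1$ copies of a tuple coexist at a single position, no adversarial schedule can raise the number of quorum-achieving values above $m$. Making this invariant survive cross-instance interactions---where processes at different instance numbers write to the same registers and stale tuples could be confused with fresh ones---is the technically delicate step and is what justifies both the quadratic $m^2$ anchor term and the single additional announcement register.
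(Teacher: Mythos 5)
Your plan has the right template (adapt Figure~\ref{mconc-rep-kset}, replace identifier-based duplicate detection by a multiplicity threshold), but the concrete mechanism you propose diverges from what actually works, and the one step you yourself flag as ``technically delicate'' is precisely where the proof lives, so this is a gap rather than a proof. The paper's algorithm does not partition the snapshot into $n-k$ slots of size $m+1$ with a per-slot quorum of $m+1$; it uses a flat array of $r=(m+1)(n-k)+m^2=(m+1)(\ell-1)+1$ components (with $\ell=n+m-k$) through which every process cycles, adopts a new preference only when it sees fewer than $\ell$ copies of its own value and at least $\ell$ copies of another, and decides when a scan shows at most $m$ distinct values, all from the current instance, outputting the most frequent. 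The anonymous replacement for Lemma~\ref{invOne} is then a counting lemma: after the $(n-\ell+1)$th $t$-deciding process's final scan, at most $\ell-1$ processes are still updating and each can perform ``bad'' updates to at most one component, so at least $r-(\ell-1)=m(\ell-1)+1$ components always hold $t$-tuples with values in the distinguished set $V$, and pigeonhole over $|V|\le m$ forces some value of $V$ to occupy $\ge\ell$ components, which pins down both the adoption rule and the most-frequent-value decision. Your $m+1$-copies-in-one-slot quorum has no analogous counting argument behind it (with anonymity and at most $m$ concurrent writers per slot position, nothing prevents $m+1$ matching copies of several different values from coexisting), and you give no invariant to replace it. Your $k$-agreement accounting is also off: you let $n-k$ processes decide arbitrarily and confine the rest to $m$ values, which bounds the outputs by $n-k+m=\ell$, not by $k$; the correct split is $n-\ell=k-m$ arbitrary deciders followed by $\ell$ deciders confined to $m$ values.

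You also misidentify the role of the single extra register. Catch-up across instance numbers is handled inside the snapshot (a scan revealing a $t'$-tuple with $t'>t$ lets a process adopt the stored history), exactly as in the non-anonymous algorithm. The extra register $H$ exists for a different reason: the space-efficient anonymous snapshot implementation from $r$ registers is only non-blocking, so a process can starve forever inside a snapshot operation while others make progress; such a process would violate $m$-obstruction-freedom unless it can learn outputs elsewhere. Hence each process runs a second thread that polls $H$, into which fast processes write their output histories. Without this observation your termination argument does not go through, and your claimed termination mechanism (``the $(m+1)$-fold replication forces some position to accumulate $m+1$ matching tuples'') is in any case not the paper's, which instead tracks, for each candidate value $v$, the multiplicity $mult(C,v)$ and shows that any value whose multiplicity ever drops below $\ell$ is eventually eliminated, leaving at most $m$ candidates.
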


The anonymous algorithm presented in Figure~\ref{majority} solves
$m$-obstruction-free repeated $k$-set agreement among $n$ processes (for $m\leq k$)
using $(m+1)(n-k)+m^2+1$ registers.
The algorithm uses the same basic idea as the one  in Section \ref{linear-space-alg}.
It uses a snapshot object with $r=(m+1)(n-k)+m^2$ components,
which can be built anonymously and non-blocking using $r$ registers \cite{GR07}.
Again, the idea is to allow the first $\ell = n+m-k$ processes
to choose arbitrary outputs and then ensure that the last $n-\ell=k-m$ processes
output at most $m$ different values, for a total of at most $k$ different values.

\begin{figure*}
\setcounter{linenum}{0}
\begin{code}
\firstline
Shared variables:\nl
\n  $A$: snapshot object with $r= (m+1)(n-k)+m^2$ components, each initially $\bot$\nl  
    $H$: register, initially the empty sequence\bl
\p\nl
Persistent local variables:\nl
\n $i \leftarrow 0$\nl
   $t \leftarrow 0$\nl
   $\textit{history}\leftarrow$ empty sequence\bl
   
\nl
\p
{\sc Propose}($v$)\nl
\n  write $history$ into $H$\nl
    $t\leftarrow t+1$\nl
    if $|\textit{history}| \geq t$ then\nl
\n          output the $t$th value in \textit{history} and halt\nl
\p
   
    run the following two threads in parallel until one of them produces an output\nl

    thread 1:\nl
\n      {\it pref}  $\leftarrow v$\nl
        $\ell \leftarrow n+m-k$ \nl
        loop\nl
\n          update $i$th component of $A$ with value $(\mbox{\it pref},t,\textit{history})$\nl
		    $s\leftarrow $ scan  of $A$ \label{snap-anon}\nl
		    if $\exists j$ such that $s[j]=(w,t',his)$ with $t'> t$ then \nl
\n               $\textit{history} \leftarrow his$ \label{line:anon-hist1}\nl
                 output the $t$-th value in $his$ and halt \label{line:anon-exit1}\nl 
\p          if $|\{s[j] : 0\leq j < r\}| \leq m$ and every entry of
$s$ is a $t$-tuple then \nl
\n              let $w$ be the most common frequent value in $s$ \nl
                $\textit{history} \leftarrow \textit{history}\cdot w$ \label{line:anon-hist2}\nl
                output $w$ and
                halt \label{output-anon} \label{line:anon-exit2} \nl
\p		    if $|\{j : s[j]=(\mbox{\it pref},t,*)\}| < \ell$ and $\exists new$ such that $|\{j:s[j]=(new,t,*)\}|\geq \ell$ then \nl
\n                {\it pref} $\leftarrow new$\label{change-pref-anon}\nl
\p		    $i\leftarrow (i+1)\mymod r$ \label{change-index-anon}\nl
\p     end loop\nl
\p  thread 2:\nl
\n       loop \label{line:anon-t2begin}\nl
           \n if $|H|\geq t$ then  \nl
           \n              let $w$ be the $t$th element of $H$\nl
                $\textit{history} \leftarrow \textit{history}\cdot w$ \label{line:anon-hist3}\nl
                 output $w$ and halt \label{output-anon-2} \label{line:anon-exit3}\nl
\p\p       end loop \label{line:anon-t2end}\nl
\p
\p end {\sc Propose}
\end{code}
\caption{Anonymous algorithm for $m$-obstruction-free repeated $k$-set agreement.\label{majority}}
\end{figure*}

For one-shot $k$-set agreement, processes
alternate between storing their preferred value in a component of the
snapshot object $A$
and performing a scan of $A$.
The conditions for outputting a value and adopting a new preference differ
from the algorithm in Section \ref{linear-space-alg} to compensate for the lack
of identifiers.
Whenever a process observes $m$ or fewer different values in a scan, it can 
output the one that occurs most frequently.
Otherwise, if a process sees fewer than $\ell$ copies of its own preference and
at least $\ell$ copies of another value, it adopts this other value as its preference.

The adaptation of this algorithm to repeated consensus is similar to the
technique used for the non-anonymous case.  There is one additional
complication:  there is no known space-efficient wait-free anonymous snapshot
implementation from registers, so we use a non-blocking implementation.
Therefore, some processes may \emph{starve} while accessing the snapshot
object, under the condition that at least one process manages to
complete infinitely many instances of $k$-set agreement.  

To ensure that starving processes
also complete their {\sc Propose} operations we use one additional
register $H$ where ``fast'' processes write their outputs.
Every process periodically checks $H$ in a parallel thread
(lines~\lref{line:anon-t2begin}-\lref{line:anon-t2begin}) 
and if it finds out that $|H|\geq t$, where $t$ is the 
instance of agreement the process is working on, it outputs the
$t$-th value found in $H$.  
As in the non-anonymous case, the sequence of values that have been
output in the instances of repeated $k$-set agreement the process has
completed so far is stored in a local variable  $\textit{history}$.
To ensure that $\textit{history}$ is updated exactly once per
instance of $k$-set agreement, we require that
the threads of a process are scheduled so that the pairs of
lines~\lref{line:anon-hist1}--\lref{line:anon-exit1}, 
\lref{line:anon-hist2}--\lref{line:anon-exit2},
and \lref{line:anon-hist3}--\lref{line:anon-exit3}
are executed without interruption from the process's other thread.

The proof of correctness of our algorithm is given in  Appendix~\ref{anonymous-alg}.


\ignore{
\section{Related Work} 

The problem of $k$-set agreement, a generalization of
consensus~\cite{FLP85}, was introduced by Chaudhuri~\cite{Cha93}.
Borowsky and Gafni \cite{BG93b}, Herlihy and Shavit~\cite{HS99}, and
Saks and Zaharoglou~\cite{SZ00} showed that no algorithm can solve
$k$-set agreement using registers for $k+1$ or more processes, which
implies that no  $m$-obstruction-free $k$-set agreement algorithm
exists for $m>k$.     
Using the $k$-converge algorithm of Yang et al.~\cite{YNG98},
$k$-obstruction-free $k$-set agreement can be solved using $n$
single-writer registers (one per process). 

However little has been known about the space complexity of $k$-set
agreement algorithms so far. For the special case of obstruction-free
consensus ($m=k=1$), Fich, Herlihy and Shavit~\cite{FHS98} showed $\Omega(\sqrt{n})$
registers are needed,  while only known upper bound for this case remains $n$.
Delporte et al.~\cite{DFGR13} proposed an algorithm for
obstruction-free $k$-set agreement ($m=1$) that uses $2n-2k$
registers. In this paper, we improve and generalize the algorithm
in~\cite{DFGR13} to all $m\leq k$ that uses $\min\{n+2m-k,n\}$
registers. 
We also present a \emph{repeated} $m$-obstruction-free $k$-set
agreement with the same space complexity and provide a very close
lower bound of $n+m-k$ registers. The technique we used to derive the
lower bound is inspired by the result of Burns and Lynch on the space
complexity of mutual exclusion~\cite{BL93}. 
In the case of anonymous algorithms, the only lower bound known so far
was $Omega(\sqrt{n})$ for obstruction-free consensus~\cite{FHS98} and
all known upper bounds use $O(n)$ registers.
}

\section{Concluding Remarks}
\indent 

\journalversion{Could mention that lower bounds apply even for the
weaker termination condition of $m$-obstacle-freedom (also defined by Taubenfeld)
which says that if at most m processes take infinitely many steps, then
some process completes its operation (but starvation of individual processes
is allowed).}

A small gap remains between the upper and lower bounds for non-anonymous repeated
set agreement.
The one-shot algorithm of \cite{DFGR13} uses fewer registers than ours for one special case:  when $m=1$ and $k=n-1$, it uses two registers compared to our three.  This suggests the upper bound could perhaps be improved to $n+m-k$.
The gaps are larger for the other scenarios shown in Figure \ref{summary}.
It would be interesting to see if there is an anonymous algorithm that uses
linear space, rather than quadratic space.
Another natural continuation of this work would be to extend the one-shot anonymous
lower bound to the non-anonymous setting.
However, closing the gap for the one-shot setting eludes us still.


\newpage

\newpage
\appendix

\section{Proof of correctness of repeated set agreement}
\label{app:non-anon-algorithm}

In this section, we prove Theorem~\ref{thm:repeated-alg}.
The pseudocode for our repeated $k$-set agreement algorithm appears in
Figure~\ref{mconc-rep-kset}.  
It essentially follows the pseudocode of the  one-shot algorithm
(Figure~\ref{two-copies}), with additional ``shortcuts'' which a
process may use to adopt a value output previously by another process that has
already reached a higher instance of repeated set agreement. 
Also, a value stored by a process in a lower instance is treated as
$\bot$.
Thus, a process decides in instance $t$ only if all tuples found in $A$ are stored by
processes in instance $t$ and there are at most $m$ distinct tuples, or
if another process has reached an instance higher than $t$.     
The local variable {\it history} initially stores an empty sequence and the local
variable $t$ is initially 0.  
The local variable {\it i} stores the location that the process updates and is initially 0. 
The values of these three local variables persist from one
invocation of {\sc Propose} to the next.
In particular, this means that the first location  of a 
{\sc Propose} is the last location of the previous {\sc Propose}. 

A process updates components of the shared snapshot object with tuples of the form $(v,id,t,\textit{history})$, where $v$ is the process's preferred value, $id$ is the identifier of the process, 
$t$ indicates which instance of set agreement the process is currently working on,
and \textit{history} is a sequence of output values for instances of set agreement.
We refer to a tuple whose third element is $t$ as a $t$-tuple.

To see that the algorithm satisfies {\bf validity}, first observe
that when a process invokes {\sc Propose} for the $t$th time, the length of its
\textit{history} variable is at least $t-1$.
The value in every $t$-tuple in $A$ and, thus, put in the $t$th position of a 
process's local variable $\textit{history}$, is the input value of some process's $t$th invocation
of {\sc Propose}.

The following Lemma reformulates Lemma~\ref{invOne} for
$t$-tuples, showing that $A$ cannot contain more than one distinct $t$-tuple
for a given process.

\begin{lemma}\label{invOne:rep}
Let $id$ be a process identifier and $t$ be a positive integer. In any reachable configuration,
all $t$-tuples with identifier $id$ in $A$ are identical. 
\end{lemma}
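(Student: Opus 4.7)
The plan is to mirror the proof of Lemma~\ref{invOne} almost verbatim, taking care of the two extra components ($t$ and \textit{history}) carried in each tuple.

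First I would observe that every $t$-tuple placed in $A$ by process $p_{id}$ is written during its $t$-th invocation of {\sc Propose} (since $t$ is a persistent local counter incremented exactly once per invocation and stored in the third component of the tuple). Within that single invocation, the local variable \textit{history} is never modified until the process is about to halt (the updates on lines \lref{halt-rep}, immediately before the halt at line \lref{output-rep}, occur at the very end of the instance). Hence all $t$-tuples with identifier $id$ that ever appear in $A$ share identical third and fourth components, and the claim reduces to showing they share the same first component (the preferred value).

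Next, assume for contradiction a reachable configuration $C$ in which $A[i_1]=(v_1,id,t,h)$ and $A[i_2]=(v_2,id,t,h)$ with $v_1\neq v_2$. Only $p_{id}$ writes tuples carrying identifier $id$, so let $u_1$ and $u_2$ be the last updates to $A[i_1]$ and $A[i_2]$ by $p_{id}$ before $C$; without loss of generality $u_1$ precedes $u_2$. Between $u_1$ and $u_2$, $p_{id}$'s preferred value must flip from $v_1$ to some other value, which can only occur at line \lref{change-pref-rep}. Consider the first such change after $u_1$, and let $i^*$ and $s^*$ be the values of $p_{id}$'s local variables $i$ and $s$ at that moment.

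The hard part is verifying that the test on line \lref{cond-rep} forces $i^*=i_1$, which is exactly the twist that makes the inductive argument close. Throughout the interval from $u_1$ up to the scan producing $s^*$, no one else can overwrite $A[i_1]$ with a tuple bearing identifier $id$, and $p_{id}$ itself has not touched $A[i_1]$ by the definition of $u_1$; therefore $s^*[i_1]=(v_1,id,t,h)$. Since $p_{id}$ has not yet changed its preference, its pref equals $v_1$ and its history still equals $h$ at the time of the scan, so $s^*[i_1]=(\textit{pref},id,t,\textit{history})$. The condition on line \lref{cond-rep} demands that no index $j\neq i^*$ contains $(\textit{pref},id,t,\textit{history})$, forcing $i^*=i_1$. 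Consequently, on the very next iteration of the loop $p_{id}$ performs an update to $A[i_1]$, which is strictly between $u_1$ and $C$, contradicting the choice of $u_1$ as the last update by $p_{id}$ to $A[i_1]$ before $C$. This completes the proof.
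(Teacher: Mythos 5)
Your proof is correct and follows essentially the same route as the paper's: reduce to the first component by noting that \emph{history} is only modified at the moment a process halts an instance, then run the $u_1,u_2$ argument showing that the condition on line \lref{cond-rep} forces $i^*=i_1$, contradicting the maximality of $u_1$. No gaps.
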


\begin{proof}
To derive a contradiction, assume that in some reachable configuration $C$,
$A[i_1]=(v_1,id,t,his_1)$ and $A[i_2]=(v_2,id,t,his_1)$ such that
$(v_1,his_1)\neq (v_2,his_1)$.
Let $p_{id}$ be the process with identifier $id$.
By the algorithm, $p_{id}$ changes its $\textit{history}$ variable only when it
switches to a higher instance of repeated agreement.  
Thus, $his_1=his_2$ and we must have $v_1\neq v_2$.
Let $C$ be reached in some execution at time $\mu$.
Let $u_1$ and $u_2$ be the last update steps before $\mu$ in which $p_{id}$ updates 
$A[i_1]$ and $A[i_2]$, respectively.
Without loss of generality, assume that $u_1$ occurred before $u_2$.
Then, at some time between $u_1$ and $u_2$, $p_{id}$ changes its
{\it pref} variable in instance $t$ (at line~\lref{change-pref-rep}).  
Consider the first time after $u_1$ when $p_{id}$ performs such a change, and let
$i^*$ and $s^*$ be the values of $p_{id}$'s local variables $i$ and $s$ at
that time. Since (1) $A[i_1]=(v_1,id,t,his_1)$ at all times between $u_1$ and $\mu$ and
(2) $s^*$ is obtained between $u_1$ and $\mu$, $s^*[i_1]$ must be equal
to $(v_1,id,t,his_1)$. By the algorithm, $i^*=i_1$; otherwise, the test in line~\lref{cond-rep} would not be satisfied,
and $p_{id}$ would not change {\it pref} in line~\lref{change-pref-rep}.
Therefore, in the next iteration of the loop, $p_{id}$ will update location $A[i_1]$.
This update is after $u_1$ and no later than $u_2$ (and hence before
$\mu$), which contradicts the definition of $u_1$ as the last update
performed by $p_{id}$ to $A[i_1]$ before~$\mu$.
\end{proof}

To show {\bf $k$-agreement}, we use arguments similar to the proof for the one shot algorithm.
Let $\ell=n-k+m$.
We call a process \emph{$t$-deciding} if it outputs a value at line \lref{output-rep} (i.e., without
adopting a value from another process's \textit{history} value) during
its $t$th invocation of {\sc Propose}.
If, for a given instance $t$, 
at most $n-\ell$ processes are $t$-deciding, 
then $k$-agreement for instance $t$ is immediate since $n-\ell=k-m<k$.   
Otherwise, consider an execution in which more than $n-\ell$ processes
are $t$-deciding.
Order these processes according to the time that they perform their last scan in instance $t$,
and let $q_0$ be the $(n-\ell+1)$th process in this ordering.
Let $X$ be the set of at most $m$ different tuples that appear in
$q_0$'s final scan and $V$ be the set of values in $X$.  Then, $|V|\leq
|X|\leq m$.
We shall show that $q_0$ and all processes that come later in the ordering
output values in $V$.  
Thus, the total number of output values in instance $t$ 
is at most $(n-\ell) + |V| \leq n-(n-k+m)+m = k$.

\begin{lemma}\label{lem:safety:rep}
After $q_0$ performs its final scan in instance $t$, 
only $t$-tuples with values in $V$ can appear twice in $A$.
\end{lemma}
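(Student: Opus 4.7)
The plan is to prove this by induction on configurations reachable from $C_0$, following the structure of Lemma \ref{lem:safety} for the one-shot case, with modifications to handle the distinction between $t$-tuples and tuples stored by processes still in earlier instances. The inductive invariant to maintain is precisely the statement of the lemma: in every reachable configuration, only $t$-tuples with values in $V$ can appear in two or more locations of $A$. For the base case, since $q_0$ decided at line \lref{output-rep}, the condition on line \lref{halt-cond2-rep} was satisfied by $q_0$'s final scan, so every entry is a $t$-tuple with value in $V$; by atomicity of the snapshot, this describes $A$ at $C_0$.

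For the inductive step, consider an update step $st$ by $p_{id}$ writing $\tau = (v, id, t', his)$ to some location $i$. If $t' \neq t$, no new $t$-tuple is created (and an existing $t$-tuple at $i$ is only removed), so the invariant is preserved. For $t' = t$, I must show that if $\tau$ appears in two locations after $st$ then $v \in V$. In Case 1, $st$ is $p_{id}$'s first update in instance $t$ after $C_0$: since $\tau \notin X$ when $v \notin V$ and only $p_{id}$ writes tuples with id $id$, $\tau$ cannot already be in $A$, so it appears only once after $st$. In Case 2, $p_{id}$ has performed a previous scan $s_{id}$ in instance $t$ after $C_0$; since $p_{id}$ did not halt, both halt conditions on lines \lref{halt-cond1-rep} and \lref{halt-cond2-rep} failed, and either $p_{id}$ changed {\it pref} on line \lref{change-pref-rep} (in which case $v$ is the value of a $t$-tuple appearing twice in $s_{id}$, and the IH applied at the atomic scan time gives $v \in V$), or $p_{id}$ incremented $i$ because the condition on line \lref{cond-rep} failed.

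The main obstacle lies in this last sub-case. I would first observe that no $\bot$ can appear in $A$ after $C_0$ (since $q_0$'s scan had no $\bot$) and that non-$t$-tuples from slow processes in earlier instances do not trigger part (1) of line \lref{cond-rep}, as they are neither $\bot$ nor equal to $p_{id}$'s stored tuple. Hence the failure of line \lref{cond-rep} is either (a) some other location held $p_{id}$'s own stored tuple at the scan time---giving two copies of $\tau$ in the scan configuration, so IH yields $v \in V$---or (b) no two identical $t$-tuples appear in $s_{id}$. To rule out case (b) under the assumption that $\tau$ appears twice just after $st$, I would use Lemma \ref{invOne:rep}: the second copy of $\tau$ at some location $j \neq i$ in the configuration just before $st$ must also have been in $A$ at the atomic time of $s_{id}$, because only $p_{id}$ writes tuples with id $id$ and $p_{id}$ performed no write to $A$ between $s_{id}$ and $st$; combined with $s_{id}[i] = \tau$ (since {\it pref} is unchanged between $s_{id}$ and $st$), this contradicts (b). The novelty relative to the one-shot proof lies precisely in this bookkeeping around non-$t$-tuples and the history field; once those are handled, the core invariant-maintenance argument is essentially unchanged.
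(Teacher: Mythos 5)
There is a genuine gap in your final sub-case, and it sits exactly where the new content of this lemma (relative to the one-shot Lemma~\ref{lem:safety}) lives. In your case (b) you assume $\tau=(v,id,t,his)$ appears twice just after $st$ and try to push both copies back to the linearization point of the scan $s_{id}$ so as to contradict the failure of the second conjunct of line \lref{cond-rep}. But the pre-existing copy of $\tau$ at a location $j\neq i$ will typically sit at the location $i'$ that $p_{id}$ updated at the start of the iteration containing $s_{id}$: in this sub-case $p_{id}$ incremented its index, so $i=(i'+1)\bmod r$, and the only constraint you have on $j$ is $j\neq i$, not $j\neq i'$. When $j=i'$ your argument yields only the single component $i'$ of $s_{id}$ holding $\tau$, which is perfectly consistent with $s_{id}$ containing no two identical $t$-tuples. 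Indeed this is the normal behaviour of the algorithm --- a process spreads its current tuple over consecutive components --- so no contradiction is available and case (b) cannot be dismissed this way.

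What is actually needed is to show that case (b) cannot occur at all, i.e., that $s_{id}$ \emph{must} contain two identical $t$-tuples; then the failure of line \lref{cond-rep} can only be due to its first conjunct (some $j\neq i'$ with $s_{id}[j]$ equal to $p_{id}$'s own current tuple), and the induction hypothesis applies as in your case (a). This requires a counting argument tied to the choice $r=n+2m-k$: call an update after $C_0$ \emph{bad} if it stores a $t'$-tuple with $t'<t$ or a $t$-tuple outside $X$; one first proves (by a separate contradiction using the same count) that each of the at most $\ell-1$ processes that can still write such tuples performs bad updates to at most one component, so at least $r-(\ell-1)=m+1$ components always hold $t$-tuples from $X$, and since $|X|\leq m$ two of them coincide. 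Your proposal treats the lower-instance tuples only as entries that do not falsely trigger the first conjunct of line \lref{cond-rep}; their real danger is that they could destroy the duplicated pair on which the one-shot argument relies, and bounding that damage via the bad-updates claim is the heart of the paper's proof, not routine bookkeeping.
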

\begin{proof}
This proof is analogous to the proof of Lemma \ref{lem:safety} for the one-shot algorithm.
Let $C_0$ be the configuration just after $q_0$'s last scan.
We shall show by induction that each configuration
reachable from $C_0$, only $t$-tuples with values in $V$ can appear in two or more locations of $A$.
For the base case, consider the configuration $C_0$.
By the definition of $V$, $A$ contains only tuples with values in $V$,  
so the claim holds.

For the induction step, suppose the claim holds in all configurations from $C_0$ to some configuration $C_1$ reachable from $C_0$.  Let $st$ be a step that takes the system from $C_1$ to another configuration $C_2$.  We must show that the claim holds in configuration $C_2$.
We need only consider steps $st$ in which some process $p_{id}$ stores a tuple $(v,id,t,his)$ in $A$.

{\sc Case 1}: $st$ is the first time $p_{id}$ stores a $t$-tuple after $C_0$. 
If $v\in V$, then $st$ cannot cause a violation of the claim.
If $v\notin V$, then $A$ contains exactly one copy of $(v,id,t,his)$ in configuration
$C_2$, so again $st$ preserves the claim.

{\sc Case 2}: $st$ is not the first time $p_{id}$ stores a $t$-tuple after $C_0$.
Let $s_{id}$ be the vector obtained by $p_{id}$'s
last scan (at line \lref{snap-rep}) before $st$.
Since $s_{id}$ is not in the last iteration of the loop during instance $t$, $s_{id}$ must not
satisfy the conditions on line \lref{halt-cond1-rep} or \lref{halt-cond2-rep}.
We show that $v\in V$, and hence $st$ preserves the claim, by considering two subcases.

{\sc Case 2a}:  $s_{id}$ satisfies the condition on line \lref{cond-rep}. 
Since the condition on line \lref{halt-cond1-rep} is not satisfied and the condition on line
\lref{cond-rep} is satisfied, every tuple in $s_{id}$ is a $t$-tuple.
Then, $p_{id}$ updates its {\it pref} variable at line \lref{change-pref},
so the value $v$ is the value
of a $t$-tuple that appears twice in $s_{id}$.
By the induction hypothesis, $v\in V$.

{\sc Case 2b}: $s_{id}$ does not satisfy the condition on line \lref{cond-rep}.

We call an update after $C_0$ {\it bad} if it stores either a $t'$-tuple with $t'<t$
or a $t$-tuple that is not in $X$.
We first argue that each process can do bad updates to at most one location.
To derive a contradiction, suppose some process does bad updates to two different
locations after $C_0$.
Consider the first process $p$ to do a bad update to a second location.
Process $p$'s last bad update to one location and its first bad update to the second 
location must be in the same instance of {\sc Propose}, because $p$ must
execute line \lref{change-ind-rep} between them.
Let $s_p$ be the vector returned by the scan that $p$ performs at line \lref{snap-rep} during the
iteration of the loop when it executes line \lref{change-ind-rep}.
Then, $s_p$ must not satisfy the conditions on line \lref{halt-cond1-rep} or \lref{cond-rep}.
Recall that at least $n-\ell+1$ processes have updated $A$ for the last time during instance $t$
prior to $C_0$.
So at most $\ell-1$ processes can do bad updates.
Since no process has done bad updates to two locations before the $p$'s scan obtained the vector $s_p$,
and no location of $s_p$ contains a tuple with instance number greater than $t$,
at least $r-\ell+1 = m+1$ locations of $s_p$ contain $t$-tuples in $X$.
Since $|X|\leq m$, at least two locations of $s_p$ contain the same $t$-tuple.
This contradicts the fact that $s_p$ does not satisfy the condition on line \lref{cond-rep}.
Thus, each process can do bad updates to at most one location.

Hence, at all times after $C_0$, at least $r-(\ell-1) = m+1$ locations have not
had any bad updates performed on them.
Since $s_{id}$ did not satisfy the condition on line \lref{halt-cond1-rep},
$s_{id}$ must contain at least $m+1$ $t$-tuples in $X$, and therefore $s_{id}$ contains
at least two identical $t$-tuples.
Moreover, some process $q_0$ satisfied the condition on line \lref{halt-cond2-rep} prior
to the scan that returned $s_{id}$, so no component of $s_{id}$ contains $\bot$.
Thus, the only reason $s_{id}$ does not satisfy the condition on line \lref{cond-rep}
must be that for some $j$ different from $p_{id}$'s position $i$,
$s_{id}[j]=(v,id,t,his)$.  Just before taking the scan $s_{id}$, $p_{id}$ updates
location $i$ with $(v,id,t,his)$.  This update occurs after $C_0$, since $st$ is 
not the first update by $p_{id}$ after $C_0$.  
In the configuration after this update to location $i$,
both $s_{id}[j]$ and $s_{id}[i]$ contain $(v,id,t,his)$.
So, by the induction hypothesis, $v\in V$.
\end{proof}
Lemma~\ref{lem:safety:rep} implies that all $t$-deciding processes after the $(n-\ell)$th  
output values in $V$ and, thus, a total of at most $n-\ell+m=k$ values are output by 
$t$-deciding processes.
The {\bf $k$-agreement} property follows. 

To prove {\bf $m$-obstruction-freedom}, consider an execution where the set $P$ of 
processes that take infinitely many steps has size at most $m$.
To derive a contradiction, assume that some process in $P$ 
completes only a finite number of {\sc Propose} operations. 
Let $t$ be the smallest number such that a process in $P$ does not complete its $t$th {\sc Propose}.
Let $P'$ be the set of processes in $P$ that do not complete the $t$th {\sc Propose}.
By the algorithm, no process in $P'$ ever witnesses the presence of a
process in a higher instance; otherwise, it would output a value
decided in instance $t$ at line \lref{halt-rep}.
  
Eventually, processes stop storing tuples with instance numbers
$t'<t$ in $A$.  Below we reuse the arguments of the proof of
Lemma~\ref{term-dontchange} to show that at least one process in $P'$
updates each component of $A$ infinitely often.

Recall that each time a process in $P'$ executes the loop in
instance $t$, it either keeps its preferred value and increments $i$ (the next location to
update) modulo $r$ or changes its preferred value without modifying  $i$. 
Let  $NS$ denote the set of processes in $P'$ that  increment $i$ infinitely often and
the set $S$ denotes the rest of the processes in $P'$, 
i.e., the processes that eventually get stuck
updating to the same location forever.

\begin{lemma}\label{term-dontchange:rep}
$NS\neq \emptyset$.
\end{lemma}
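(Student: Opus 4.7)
The plan is to mirror the proof of Lemma~\ref{term-dontchange} for the one-shot case, adapting it to the $t$-tuples $(v,id,t,his)$ used in the repeated algorithm. First I would establish the refinement $P'=P$: if some $q\in P\setminus P'$ completes its $t$th \textsc{Propose}, it then invokes its $(t+1)$th \textsc{Propose} (since it takes infinitely many steps) and writes a $(t+1)$-tuple into $A$; any later scan by a process in $P'$ would then satisfy the condition on line~\lref{halt-cond1-rep} and halt, contradicting $P'\neq\emptyset$.

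Next, for contradiction, assume $NS=\emptyset$, so $P'=S$. Fix a time $\mu$ after which (a) only processes in $P'$ take steps, (b) no process changes its local variable $i$, and (c) no tuple with an instance number different from $t$ is written to $A$ (using the preamble together with $P'=P$). Let $M$ be the set of locations updated after $\mu$, so $|M|\leq|P'|\leq m$ and $|NM|\geq r-m = n+m-k\geq 2$. Pick $\mu'\geq\mu$ by which every process in $P'$ has performed at least one update; thereafter each $M$-location holds a $t$-tuple written by a $P'$-process.

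Now fix $p\in P'$ and let $s_p$ be a scan performed by $p$ after $\mu'$. Since $p\in S$, the condition on line~\lref{cond-rep} must be satisfied and neither halting condition (lines~\lref{halt-cond1-rep},~\lref{halt-cond2-rep}) can hold; hence $s_p$ contains two identical $t$-tuples $(v,k,t,his)$. Because distinct $M$-locations carry tuples written by distinct processes, these two locations cannot both lie in $M$. So the duplicate lies in $M\times NM$ or in $NM\times NM$. In the first case, the $M$-copy was written by $p_k\in P'$ at its fixed location $i_{p_k}$; the scan $p_k$ performs immediately after this write finds $(v,k,t,his)=(pref_{p_k},p_k,t,history_{p_k})$ also in an $NM$-location, violating the first conjunct of line~\lref{cond-rep} and forcing $p_k$ to advance $i$, contradicting $p_k\in S$. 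In the second case, the $NM$-copies of $(v,k,t,his)$ persist forever, and arguing exactly as in Case~2 of Lemma~\ref{term-dontchange}, $p$'s next pref-assignment must draw on a pair not previously present in $A$; this newly arising duplicate can lie neither in the frozen $NM\times NM$ nor in $M\times M$, so it lies in $M\times NM$, reducing to the first case.

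The main obstacle is the second case; it goes through unchanged from the one-shot argument because each process's \textit{history} field is invariant throughout instance~$t$ (it is updated only upon output), so a process's tuple is determined entirely by its current \textit{pref}, and the reasoning about when a fresh duplicate must appear is structurally the same as in Lemma~\ref{term-dontchange}.
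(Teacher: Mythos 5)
Your overall strategy---freeze the execution at a late enough time $\mu$ and then rerun the case analysis of Lemma~\ref{term-dontchange}---is the same as the paper's, and your treatment of the two duplicate-tuple cases matches what the paper imports by reference. The genuine gap is your opening claim that $P'=P$. You argue that once some $q\in P\setminus P'$ writes a $(t+1)$-tuple into $A$, ``any later scan by a process in $P'$'' satisfies the condition on line~\lref{halt-cond1-rep}. That does not follow: the $(t+1)$-tuple occupies a single component of $A$ and can be overwritten by a $t$-tuple before any process of $P'$ performs a scan---in particular by a stabilized process of $S$ that has settled on that very component and updates it in every iteration. Since the execution here is the assumed bad execution (given adversarially, not constructed by you), you cannot exclude a schedule in which every higher-instance write by $q$ lands in a component of $M$ and is immediately overwritten. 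So $P'=P$ is not established, and your choice of $\mu$ then collapses: your conditions (a) (``only processes in $P'$ take steps'') and (c) (``no tuple with an instance number different from $t$ is written'') are unobtainable when $P\setminus P'\neq\emptyset$, and these are exactly what give you $|M|\le m$ and the frozen set $NM$.

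The paper proves only the weaker statement that is actually needed: taking $M$ to be the set of components on which the processes of $S$ eventually settle, it observes that no process of $P\setminus P'$ can update a component \emph{outside} $M$ infinitely often---such a component is updated only finitely often by everyone else, so a higher-instance tuple written there would persist and be caught by the next scan of a process in $P'$. This yields a time $\mu$ after which no component outside $M$ is updated at all, which suffices to define $NM$ (of size at least $n+m-k\ge 2$) and reuse the one-shot argument; higher-instance tuples written inside $M$ cannot appear in any scan of a process in $P'$ (else that process would halt), so they cause no harm. Replacing your claim $P'=P$ with this containment statement repairs the argument, and the rest of your proof then goes through essentially as written.
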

\begin{proof}
The proof is by contradiction.
Assume it is not the case ($P'=S$).

Let $M$ be the set of at most $m$ locations that processes in $S$ eventually settle on.
Note that no process in $P-P'$ can update a location outside of $M$ infinitely often
because then the processes in $P'$ would eventually see a tuple with instance number greater
than $t$ and complete their $t$th {\sc Propose} operation.
Let $\mu$ be a time after which
only processes in $P$ take steps and 
no process updates a location outside of $M$.
Let $NM$ be the set of at least $n+m-k\ge 2$ locations that are never changed after $\mu$.

Since all positions in $NM$ that contain tuples of earlier instances
are ignored, we simply reuse the arguments of the proof of
Lemma~\ref{term-dontchange}, to derive a contradiction.
\end{proof}
By Lemma~\ref{term-dontchange:rep}, (1)~there is a time after which only tuples stored
by processes in $P'$ are found in scans performed by processes in $P'$, 
and all of them are $t$-tuples.
By~Lemma~\ref{invOne:rep}, (2)~all $t$-tuples in $A$ of the same process are identical
and (3)~$|P'|\leq|P|\leq m$. (1), (2) and (3) imply that there is a time after which,  whenever a process $p\in P'$ performs a scan, 
it finds at most $m$ different $t$-tuples in the returned vector and,
thus, decides, contradicting the definition of $P'$.  This completes the proof of the {\bf $m$-obstruction-freedom} property. 

Thus, we have shown that the algorithm solves repeated $k$-set agreement using
a snapshot object with $n+2m-k$ registers, which can be implemented using
$\min(n,n+2m-k)$ registers, as described in the proof of Theorem \ref{one-shot-alg}.
This completes the proof of Theorem \ref{thm:repeated-alg}.


\section{Proof of correctness of anonymous repeated set agreement}
\label{anonymous-alg}
\indent

To prove Theorem~\ref{thm:anon-alg}, consider our algorithm
in Figure~\ref{majority}.
The algorithm actually uses a non-blocking snapshot object with $r=(m+1)(n-k)+m^2$ components,
which can be built anonymously using $r$ registers \cite{GR07}, plus one additional register.
Each component of the snapshot object is initially $\bot$.

In this algorithm, a process stores tuples of the form $(v,t,history)$ where 
$v$ is the process's preferred value, $t$ indicates which instance of set
agreement the process is currently working on, and $history$ is a sequence
of output values for instances of set agreement.  We refer to a tuple whose
second element is $t$ as a $t$-tuple.

As an invariant, it is easy to see that each of the following can only store input values
of some process's $t$ invocation of {\sc Propose}:
\begin{itemize}
\item
a process's {\it pref} variable during the process's $t$th invocation of {\sc Propose},
\item
the first component of a $t$-tuple appearing in $A$, and
\item
the $t$th element of any sequence that is stored in a process's {\it history} variable, in the shared variable $H$ or inside a tuple in $A$.
\end{itemize}
{\bf Validity} follows.

Next, we prove the {\bf $k$-agreement} property.
A process is \emph{$t$-deciding} if it outputs a value on line \lref{output-anon}.
Any other process that produces an output for its $t$th {\sc Propose} operation
outputs the same result as some $t$-deciding process, so it suffices to show
that the $t$-deciding processes output at most $k$ different values.
As in Section \ref{linear-space-alg}, we show that the last $\ell=n-k+m$ $t$-deciding processes output at most $m$ different values, so that the total number of outputs for instance $t$ is at most $n-\ell+m=k$ values.

If at most $n-\ell$ processes are $t$-deciding, then $k$-agreement is trivial for the $t$th instance of set agreement, since
$n-\ell = k-m < k$.
So, consider an execution in which more than $n-\ell$ processes are $t$-deciding.
Order the $t$-deciding processes according to the time that they perform their last scan in their $t$th invocations of {\sc Propose},
and let $q_0$ be the $(n-\ell+1)$th process in this ordering.
Let $X$ be the set of tuples that appear in $q_0$'s final scan.
Let $V$ be the values that appear in tuples in $X$.
We prove that $q_0$ and all $t$-deciding processes that come later in the ordering
output values in $V$.

We call an update of $A$ after $C_0$ a {\it bad update} if it stores
a $t'$-tuple with $t'<t$ or a $t$-tuple whose value is not in $V$.

\begin{lemma}
After $q_0$ performs its final scan in its $t$th {\sc Propose} operation, each process performs bad updates to at most one 
component of $A$.
\end{lemma}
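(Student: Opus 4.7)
The plan is to argue by contradiction. Suppose some process violates the claim, and let $p$ be the process whose second bad update, call it $u_2$, occurs earliest across all processes. Write $s_p$ for the scan that $p$'s thread~1 performs in the loop iteration immediately preceding the iteration that contains $u_2$, and let $\tau$ be $p$'s local instance number at $u_2$ (which equals its instance at $s_p$, since these two steps lie inside a single invocation of {\sc Propose}, threads being aborted once an output is produced). The strategy is to show that $s_p$ already forces either the exit condition at line~\lref{line:anon-exit1} or the change-pref guard at line~\lref{change-pref-anon} to act in a way that prevents $u_2$ from being a bad update.

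First I would bound how many distinct components of $A$ could ever have received a bad update by the time of $s_p$. At $C_0$, at least $n-\ell+1$ processes have already performed their final instance-$t$ scan (the first $n-\ell$ $t$-deciding processes together with $q_0$ itself) and therefore contribute no further updates relevant to instance $t$ or below; so at most $\ell-1$ processes are capable of producing bad updates after $C_0$. By minimality of $p$, each such process, $p$ included, has touched at most one distinct component before $u_2$, hence at most $\ell-1$ components have ever been bad-updated. The remaining $r-(\ell-1) = m(\ell-1)+1$ components have only received good writes; by a simple invariant on what each thread~1 update can store, they hold, at every configuration between $C_0$ and $s_p$, either a $t$-tuple with value in $V$ or a tuple with instance $>t$.

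Next I would split on $\tau$. If $\tau<t$, every one of those $m(\ell-1)+1$ good entries has instance at least $t>\tau$, so the guard of line~\lref{line:anon-exit1} fires at $s_p$ and $p$ halts its $\tau$th {\sc Propose} before reaching $u_2$, a contradiction. If $\tau=t$, no entry of $s_p$ can have instance $>t$ (else exit~1 fires), so all $m(\ell-1)+1$ good entries are $t$-tuples with values in $V$. Since $|V|\le m$, pigeonhole yields a value $v^\star\in V$ appearing at least $\ell$ times in $s_p$, whereas any value outside $V$ appears at most $\ell-1$ times (bounded by the total number of bad components). Now whatever $p$'s current pref is, the change-pref guard at line~\lref{change-pref-anon} forces $p$'s pref into $V$ before $u_2$: if pref lies outside $V$ then both sub-conditions of the guard are satisfied using $v^\star$; and in any case the guard can only replace pref with a value appearing $\ge\ell$ times, which must lie in $V$. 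Hence $u_2$ writes a value in $V$, making it a good update, a contradiction.

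The main obstacle is the second paragraph: controlling what $s_p$ actually contains requires verifying that the \emph{good} invariant on untouched components survives all interleavings between $C_0$ and $s_p$, where concurrent processes may still run in various instances and issue good updates of their own. Once that invariant is in hand, the arithmetic identity $r-(\ell-1)=m(\ell-1)+1$ together with $|V|\le m$ cleanly forces exit~1 or the change-pref guard to act, closing both cases.
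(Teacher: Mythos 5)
Your proof is correct and follows essentially the same route as the paper's: pick the earliest offending second bad update, bound the bad locations by $\ell-1$ via minimality, use $r-(\ell-1)=m(\ell-1)+1$ and $|V|\le m$ to pigeonhole a value of $V$ into at least $\ell$ components of $s_p$, and conclude that the change-pref guard (or the exit at line~\lref{line:anon-exit1}) prevents the bad update. Your explicit case split on the process's instance $\tau<t$ versus $\tau=t$ is a slightly more careful rendering of a step the paper treats tersely, but it is not a different argument.
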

\begin{proof}
To derive a contradiction, assume that some process performs bad updates to two components of $A$ 
after $q_0$'s final scan $scan_0$.
Consider the first process $p$ to do a bad update on a second location.
Let $s_p$ be the vector returned by the last scan that $p$ performs before its bad update to the second location.
This scan causes $p$ to execute line \lref{change-index-anon}
so that it can perform an update on the second location.
Thus $s_p$ does not contain any $t'$-tuple with $t'>t$.
Since $n-\ell+1$ processes have performed their final scan of their $t$th {\sc Propose}
operation at or before $scan_0$, at most $\ell-1$ processes can perform
updates that store $t'$-tuples with $t'\leq t$ after $scan_0$.
By definition of $p$, none of those $\ell-1$ processes 
have performed bad updates on two different locations between
$scan_0$ and  $p$'s scan that returned $s_p$.
Since $scan_0$ returned a vector that contained only $t$-tuples, $s_p$ must contain at most $\ell-1$ components
that are either $t'$-tuples with $t'<t$ or $t$-tuples with values not in $V$.
So there are at least $r-\ell+1 = (m+1)(\ell-1)+1-(\ell-1)=m(\ell-1)+1$ locations of $s_p$ that contain $t$-tuples with values in $V$.
Since $|V|\leq m$, one of the values in $V$ must appear in $t$-tuples stored in at least $\ell$ locations.
Thus $p$ must adopt a value in $V$ after it obtains the scan $s_p$, contradicting
the fact that $p$'s next update after this scan uses a value not in $V$.
\end{proof}

It follows that at any time after $q_0$'s final scan, there are at most $\ell-1$
$t$-tuples in $A$ with values that are not in $V$.
Any $t$-deciding process ordered after $q_0$ performs a final scan
that returns only $t$-tuples, so one of the values in $V$ must appear in at least
$\ell$ of them, and is therefore the most frequent value in the scan.
Thus, the value output by any such process must be in $V$.
Hence, the total number of  values output is at most $(n-\ell) + |V| \leq n-(n-k+m)+m = k$, ensuring that $k$-agreement is satisfied.

Finally, we prove {\bf $m$-obstruction-freedom}.
For this part of the proof, it is convenient to consider lines \lref{snap-anon}
to \lref{change-index-anon} to be a single atomic action.
Since there is only one shared-memory access in this block of code,
there is no loss of generality in this assumption:  every execution
has an equivalent execution where this block is executed atomically,
so if we prove $m$-obstruction-freedom for executions that satisfy this assumption 
then it also holds for all executions.

Consider an execution where at most $m$ processes continue to take steps forever.
Let $P$ be the set of processes that complete infinitely many accesses
to the snapshot object.  $P$ is non-empty, since the snapshot implementation
we use is non-blocking, and $|P|\leq m$.
To derive a contradiction, assume
that some process in $P$ never completes one of its {\sc Propose} operations.
Let $t$ be the smallest number such that some process in $P$ does not complete
its $t$th {\sc Propose}.
Let $P'$ be the set of processes in $P$ that do not complete their $t$th {\sc Propose}
operation.
Let $\mu$ be a time after 
\begin{itemize}
\item
every process outside $P$ has stopped performing updates on $A$,
\item
every process in $P'$ has begun its $t$th {\sc Propose} operation, 
\item
every process in $P-P'$ has begun its $(t+1)$th {\sc Propose} operation, and
\item
no component of $A$ contains a $t'$-tuple for any $t'<t$.
\end{itemize}
It is possible to choose $\mu$ to satisfy the last condition because
each process in $P'$ completes infinitely many iterations of the loop
and therefore updates
every location of $A$ after $\mu$.
Thus, eventually
all $t'$-tuples with $t'<t$ are overwritten.
Note that after $\mu$, no component of $A$ ever contains a $t'$-tuple with $t'<t$.

We say that a value $v$ is a {\it candidate} in a configuration $C$ if it is either the {\it pref} value of some process in $P'$
or it appears in a $t$-tuple in $A$.
We shall prove that there is a configuration after $\mu$ with at most $m$ candidates.  
After that point, only those $m$ values can appear in $t$-tuples in the snapshot object.
It follows that every process in $P'$ completes its $t$th {\sc Propose}
when it next performs a scan, which contradicts the definition of $P'$.

\begin{lemma}
\label{disappear}
If, in some configuration $C$ after $\mu$, a value $v$ is not the 
{\it pref} of any process in $P'$ and $t$-tuples with value $v$ appear in fewer than 
$\ell$ components of the snapshot object, then after some 
time, $v$ will not be a candidate anymore.
\end{lemma}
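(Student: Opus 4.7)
The plan is to establish a monotone invariant that holds in every configuration after $C$: no process in $P'$ has \emph{pref} equal to $v$, and the number of components of $A$ containing a $t$-tuple with value $v$ stays strictly less than $\ell$ and is non-increasing. Combined with the fact that each process in $P'$ cycles through all $r$ positions of $A$ (since the index $i$ is incremented modulo $r$ on every iteration of the main loop), this forces all $v$-tuples to be overwritten in finite time, after which $v$ is no longer a candidate.

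The first step is to identify the only possible source of new $v$-tuples after $\mu$. Processes outside $P$ have stopped updating $A$; processes in $P \setminus P'$ are in instance $t+1$ or later and so write only $t'$-tuples with $t'>t$. Hence a $t$-tuple with value $v$ can enter $A$ only through an update by a process in $P'$ whose current \emph{pref} is $v$. Combined with the hypothesis at $C$, it therefore suffices to show that no process in $P'$ adopts $v$ after $C$.

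I would prove this by induction on steps taken after $C$, maintaining the two-part invariant stated above. Using the atomicity convention for lines \lref{snap-anon}--\lref{change-index-anon}, each step of a $P'$-process is either a standalone update (which, by the \emph{pref}-invariant, writes a non-$v$ value and can only decrease the count of $v$-tuples), or an atomic scan-through-increment block. In the latter, the scan observes exactly the current number of $v$-tuples, which is $<\ell$; since the adoption test at line \lref{change-pref-anon} requires seeing $\geq \ell$ copies of the candidate new value, the process cannot adopt $v$, preserving the \emph{pref}-invariant. Steps by processes outside $P'$ affect neither quantity after $\mu$.

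Finally, because each process in $P'$ completes infinitely many loop iterations and increments $i$ modulo $r$ unconditionally within each atomic block, some $P'$-process overwrites every component of $A$ within finitely many steps after $C$; from that moment on, $A$ contains no $v$-tuple and no $P'$-process has \emph{pref} equal to $v$, so $v$ is not a candidate and will never become one again. The one subtle point to keep in view is the atomicity of the scan-adopt-increment block, which is what prevents a stale scan from later triggering an adoption of $v$; otherwise this is a clean monotonicity-plus-progress argument.
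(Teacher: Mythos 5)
Your proof is correct and follows essentially the same route as the paper's: both arguments hinge on the facts that after $\mu$ only $P'$-processes with \emph{pref} $v$ can write a $t$-tuple with value $v$, that the adoption test requires seeing at least $\ell$ copies (which a scan cannot return while fewer than $\ell$ components hold $v$), and that the unconditional increment of $i$ forces every component to be overwritten. The only difference is presentational — you phrase it as a monotone invariant maintained by induction on steps, whereas the paper argues by contradiction from the first adoption of $v$ after $C$ — and this does not change the substance.
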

\begin{proof}
To derive a contradiction, assume that some process in $P'$ changes its
local {\it pref} variable to $v$ in some step after $C$.
Consider the first such step by any process after $C$.
Let $scan$ be the scan performed in that step.
Between $C$ and $scan$, no process executing its $t$th {\sc Propose} stores
a $t$-tuple with value $v$, so the result of $scan$ contains $t$-tuples with value $v$
in at most $\ell-1$ components, contradicting the fact that $p$ adopts the value $v$
in the step when it performs $scan$.

Thus, no process in $P'$ ever has $v$ as its preferred value after $C$.
So, no $t$-tuple with value $v$ is ever stored in $A$ after $C$.
Since each process in $P'$ executes infinitely many steps of its $t$th
{\sc Propose} operation, and increments its index $i$ in every iteration of
the loop, it eventually overwrites every component of $A$.
Thus, there is a time (after $C$), after which no component of $A$ contains a
$t$-tuple with value $v$.
After this time, $v$ is never a candidate.
\end{proof}

\begin{lemma}
\label{choice-available}
Whenever a process in $P'$ performs a scan after $\mu$, 
there is some value $v$ that appears in $t$-tuples in at least 
$\ell$ components of $A$.
\end{lemma}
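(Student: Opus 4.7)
My plan is to argue this by contradiction. Suppose that at the linearization point $\tau$ of $S$, no value appears in at least $\ell$ components of $A$. Because $p\in P'$ never completes its $t$th {\sc Propose}, neither exit condition can fire at $S$, so no $t'>t$ tuple sits in $A$ at $\tau$; combined with the choice of $\mu$ (under which no $t'<t$ tuple remains), every entry of $A$ at $\tau$ is either a $t$-tuple or $\bot$. I would first strengthen the choice of $\mu$ so that $A$ has no $\bot$ entry afterwards, which is legitimate because (unlike in the non-anonymous algorithm of Section~\ref{linear-space-alg}) every $P'$ process increments $i$ modulo $r$ in \emph{every} iteration of the loop (line~\lref{change-index-anon}), so each $P'$ process eventually writes to every component.

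With all $r$ entries of $A$ being $t$-tuples and each value having count at most $\ell-1$, a pigeonhole count on
$$r=(m+1)\ell-m=(m+1)(\ell-1)+1$$
components forces at least $\lceil r/(\ell-1)\rceil=m+2$ distinct values in $A$ at $\tau$. Each such value is a candidate at $\tau$, so there are at least $m+2$ candidates.

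The main obstacle is turning the $m+2$ lower bound into a contradiction, since at most $m$ of these candidates are current prefs of processes in $P'$. My plan is to leverage Lemma~\ref{disappear} together with the observation that a value becomes a \emph{new} candidate only when some $P'$ process adopts it as its pref, and adoption requires that value to have count at least $\ell$ at the triggering scan. For every non-pref candidate $v$ with count $<\ell$ at a configuration $C$, the proof of Lemma~\ref{disappear} shows that no $P'$ process ever adopts or writes $v$ after $C$, so $v$-count is non-increasing from $C$ onward and drops to zero within one full cycle by each $P'$ process through the positions of $A$. Inductively chaining this along the sequence of scans performed by $P'$ after $\mu$, and enlarging $\mu$ once more so that every ``waning'' non-pref value present at $\mu$ has had time to flush out, I expect to conclude that at $\tau$ the only candidates are current prefs of $P'$, yielding at most $m$ candidates and contradicting the $m+2$ lower bound. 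The subtle point I anticipate is correctly handling the timing: pinning down exactly when past pref-changes could still leave non-pref candidates in $A$ at $\tau$, and showing that under the hypothesis (no value good anywhere in the relevant prefix of the execution) no such leftover is possible.
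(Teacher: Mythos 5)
Your setup (contradiction hypothesis, strengthening $\mu$ so that no $\bot$ and no stale $t'$-tuple with $t'<t$ remains, and the pigeonhole count showing that if every value has multiplicity at most $\ell-1$ in all $r=(m+1)(\ell-1)+1$ components then at least $m+2$ distinct values are present) is fine, but the step you flag as ``subtle'' is where the argument actually breaks. The intermediate claim you are aiming for --- that at the instant $\tau$ of the scan the only candidates are the current \emph{pref} values of processes in $P'$ --- is false in general and cannot be forced by any enlargement of $\mu$. A process in $P'$ may at any time after $\mu$ abandon a preference $v$ (because some other value had count at least $\ell$ in one of its \emph{earlier} scans), leaving up to $\ell-1$ stale copies of $v$ in $A$ that persist until overwritten; the contradiction hypothesis constrains multiplicities only at $\tau$ and says nothing about those earlier scans, so such leftovers can be present at $\tau$. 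Indeed the lemma's own conclusion is compatible with $A$ containing many distinct values at $\tau$; it only asserts that one of them is frequent. So you cannot hope to contradict the $m+2$ lower bound by an instantaneous count of at most $m$.

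The fix --- and this is what the paper does --- is to place the contradiction in the future rather than at $\tau$. Under the hypothesis, \emph{every} value that is not the \emph{pref} of some process in $P'$ in the configuration $C$ just after the scan has multiplicity below $\ell$, so Lemma~\ref{disappear} applies to each of them and they all eventually cease to be candidates. Hence from some later time on there are at most $m$ candidates, so every scan by a process in $P'$ after that time returns at most $m$ distinct values, all in $t$-tuples, and the process decides --- contradicting the definition of $P'$ as the set of processes that never complete their $t$th {\sc Propose}. Your pigeonhole count and the attempt to bound candidates at $\tau$ are then unnecessary; the only ingredients needed are the hypothesis ``all multiplicities $<\ell$ at $C$'' and Lemma~\ref{disappear}.
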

\begin{proof}
To derive a contradiction, suppose there is no such value $v$.  
Consider the configuration $C$ immediately after the scan.
By Lemma \ref{disappear}, only the values stored in {\it pref} variables of processes
in $P'$ remain candidates forever.  There are at most $m$ such values.
Thus, there is a time after which every $t$-tuple in $A$ contains only those values.
Whenever a process in $P'$ performs a scan after that time, it will
terminate, contradicting our assumption that no process in $P'$ ever completes
its $t$th {\sc Propose}.
\end{proof}

%

For any configuration $C$ and value $v$, let $mult(C,v)$ be the number of components of $A$ that contain $t$-tuples with value $v$ in $C$ plus the number of poised processes that are poised to perform an update and have {\it pref} $v$ in $C$.
The following lemma generalizes Lemma \ref{disappear}.

\begin{lemma}
\label{disappear2}
Consider a value $v$.  If, in some configuration $C$ after $\mu$, $mult(C,v) < \ell$, then after some time, $v$ will no longer be a candidate.
\end{lemma}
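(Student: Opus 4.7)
The plan is to first establish the invariant that $mult(C',v)<\ell$ holds at every configuration $C'$ reachable from $C$, and then to use this invariant together with Lemma~\ref{choice-available} to show that $v$ must eventually cease to be a candidate.

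The invariant will be proved by induction on the steps performed after $C$, analyzing how each type of step can change $mult(\cdot,v)$. An update step by a process $p$ with \emph{pref} $u$ at location $i$ has two simultaneous effects: $p$ leaves the set of processes poised to update (decrementing the poised-with-$v$ count by $1$ exactly when $u=v$), and $A[i]$ is overwritten with a $t$-tuple of value $u$ (modifying the component-with-$v$ count accordingly). A short case analysis shows that the total change in $mult(\cdot,v)$ is $-w$, where $w\in\{0,1\}$ indicates whether $A[i]$ previously held a $v$-$t$-tuple; so updates never increase $mult(\cdot,v)$. An atomic scan-and-decide block by $p$ can increase $mult(\cdot,v)$ only if $p$ exits the block poised with pref $v$, which happens in one of two ways: (i)~$p$ adopts $v$ on line~\lref{change-pref-anon}, which requires the scan to return at least $\ell$ components containing $v$-$t$-tuples, so that $mult(\cdot,v)\ge\ell$ at the scan, contradicting the induction hypothesis; or (ii)~$p$ entered the block already with pref $v$ and kept it. In case~(ii), Lemma~\ref{choice-available} guarantees that the scan returns some value $v^\star$ in at least $\ell$ components, and the induction hypothesis forces $v^\star\neq v$; both conditions on line~\lref{change-pref-anon} are then satisfied (fewer than $\ell$ copies of $v$, at least $\ell$ copies of $v^\star$), so $p$ adopts $v^\star$, contradicting case~(ii). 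Thus the invariant is preserved.

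With the invariant in hand, the conclusion follows in the same spirit as Lemma~\ref{disappear}. The argument for case~(ii) above actually shows that any process in $P'$ that enters a scan-and-decide block with pref $v$ exits with pref $\neq v$, and the invariant prevents it from ever readopting $v$. Since every process in $P'$ performs infinitely many scan-and-decide blocks, after some finite time no process in $P'$ has pref $v$, and from then on no $v$-$t$-tuple is ever written to $A$. Because processes in $P'$ continue to cycle through all $r$ components by incrementing $i$ on line~\lref{change-index-anon} and continue performing updates, every existing $v$-$t$-tuple in $A$ is eventually overwritten. At that point $v$ is neither the pref of a process in $P'$ nor contained in any component of $A$, so $v$ is no longer a candidate.

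The main obstacle is the careful bookkeeping inside the induction step, since $mult(\cdot,v)$ can fluctuate within a single update/scan cycle: the poised contribution from a process drops on the update and can rise back up on the following scan. One must argue that these transient changes cannot push $mult$ up to $\ell$, which ultimately rests on the fact that the scan observes only the components of $A$ and the component count is dominated by $mult$ itself. A secondary subtlety, which I would dispatch with a short argument, is that processes outside $P'$ could in principle contribute to the poised-with-$v$ count; but after $\mu$ such a process is in an instance strictly greater than $t$, writes no $t$-tuples, and cannot successfully write any $t'$-tuple with $t'>t$ without causing some process in $P'$ to halt on the first check in the atomic block, contradicting the definition of $P'$.
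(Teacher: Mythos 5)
Your proposal is correct and follows essentially the same route as the paper: both establish by induction over single steps the invariant that $mult(\cdot,v)$ stays below $\ell$ (updates cannot increase it, and a scan block cannot leave a process with \emph{pref} $v$ because Lemma~\ref{choice-available} supplies a value with at least $\ell$ copies that, by the invariant, differs from $v$), and then conclude that $v$ eventually ceases to be a candidate. Your treatment of the transient bookkeeping and of processes outside $P'$ is somewhat more explicit than the paper's, which simply notes that steps of $P-P'$ can only decrease $mult$ and closes by invoking Lemma~\ref{disappear}, but the substance is the same.
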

\begin{proof}
We first show that if a single step $st$ takes the system from a configuration $C_1$ to another configuration $C_2$ and $mult(C_1,v)<\ell$ then $mult(C_2,v)<\ell$.
If $st$ is a step by a process in $P-P'$, it can only decrease $mult$.
If $st$ is an update by a process in $P'$, 
$st$ may increase by one the number of components of $A$ containing a $t$-tuple with
value $v$, but then $st$ will also decrease the number of processes poised
to store a $t$-tuple with value $v$ by one, so the value of $mult$ cannot
be increased by $st$.
Finally, suppose $st$ is an atomic execution of lines \lref{snap-anon}--\lref{change-index-anon}.
In $C_1$, fewer than $\ell$ components of $A$ contain $t$-tuples with value
$v$ (since $mult(v,C_1)<\ell$).  Moreover, by Lemma \ref{choice-available}, there
is a value $v'$ such that $t$-tuples with value $v'$ appear in at least $\ell$
components of the scan performed during $st$.  Thus,
the process performing $st$ adopts some value different from $v$ as its {\it pref}.
So, $st$ cannot increase $mult$ for $v$.

Thus, in every configuration reachable from $C$, $mult(C,v)<\ell$.
As argued above, any process in $P'$ that performs a scan after $C$ adopts
a value different from $v$.
Thus, eventually, no process will have its {\it pref} equal to $v$, and at that time, $v$ will be in at most $\ell-1$ components of $A$, so Lemma \ref{disappear} ensures that $v$ will eventually cease to be a candidate.
\end{proof}

Now, consider a configuration $C$ immediately after some process has performed an
update (after $\mu$).
There are $(m+1)(\ell-1)+1$ registers and at most $m-1$ processes in $P'$ poised
to perform an update.  Thus, $\sum\limits_v mult(v,C) \leq (m+1)\ell-1$.  Therefore, at most $m$ values have $mult(C,v) \geq \ell$.  All other values will eventually cease to be candidates, by Lemma \ref{disappear2}, so eventually there will be at most $m$ candidates.  All processes in $P'$ will then terminate when they next perform a scan, which contradicts our definition of $P'$.

Thus, we have shown that every process in $P$ completes infinitely many {\sc Propose} operations.
There remains one more thing to show.  There may be some processes not in $P$
that takes infinitely many steps.  (These are processes that starve in the
non-blocking implementation of the snapshot object.)
We must show that each such process $p$ also completes all of its {\sc Propose} operations.
Processes in $P$ write longer and longer sequences to $H$ infinitely often and processes not in $P$ eventually stop writing to $H$.
Thus, for all $t$, $p$ will eventually see a sequence in $H$ of length at least $t$,
and will then complete its $t$th {\sc Propose} operation.

This completes the proof of Theorem \ref{thm:anon-alg}.
We remark that for the one-shot case, the register $H$ is not
required, so we can solve the one-shot version using one less register.

\end{document}